\long\def\@makecaption#1#2{\ifx\@captype\@IEEEtablestring%
\footnotesize\begin{center}{\normalfont\footnotesize #1}\\
{\normalfont\footnotesize\scshape #2}\end{center}%
\@IEEEtablecaptionsepspace \else \@IEEEfigurecaptionsepspace
\setbox\@tempboxa\hbox{\normalfont\footnotesize {#1.}~~ #2}%
\ifdim \wd\@tempboxa >\hsize%
\setbox\@tempboxa\hbox{\normalfont\footnotesize {#1.}~~ }%
\parbox[t]{\hsize}{\normalfont\footnotesize \noindent\unhbox\@tempboxa#2}%
\else \hbox
to\hsize{\normalfont\footnotesize\hfil\box\@tempboxa\hfil}\fi\fi}
\newcommand{\eg}{\textit{e.g.},}
\newcommand{\ie}{\textit{i.e.},}
\newcommand{\eqend}{\,\cdot}
\newcommand{\pr}[1]{\mathbb{P}\left\{ #1 \right\}}
\newcommand{\E}[1]{\mathbb{E}\left[#1\right]}
\newcommand{\mx}[1]{\ensuremath{\mathbf{#1}}}
\newcommand{\mc}[1]{\mathcal{#1}}
\newcommand{\pmax}{ \ensuremath{P_{\mathrm{max}}} }
\newcommand{\set}[1]{\left\{ #1 \right\}}
\newcommand{\enc}[1]{ (#1) }
\newcommand{\angles}[1]{ \langle#1\rangle }
\newcommand{\sinr}{\mathrm{SINR}}
\newcommand{\fnote}[1]{\footnote{#1}}
\newcommand{\mg}[1]{}%{{\color{Red}[MG]} {\color{Blue}#1} }
\newcommand{\spar}[1]{\smallskip\noindent{\bf#1. }}
\newtheorem{lemma}{Lemma}
\newtheorem{theorem}{Theorem}
\newtheorem{propose}{Proposition}
\begin{document}

\title{Efficient Wireless Security\\ Through Jamming, Coding and Routing}
\author{
    Majid Ghaderi$^*$,
    \thanks{$^*$Department of Computer Science, University of Calgary, Email: {\tt mghaderi@ucalgary.ca}}
    Dennis Goeckel$^{\dag}$,
    \thanks{$^{\dag}$Department of Electrical and Computer Engineering,
    UMass Amherst, \mbox{Emails: {\tt \{goeckel, mdehghan\}@ece.umass.edu}  }}
    Ariel Orda$^{\ddag}$ and
    \thanks{$^{\ddag}$Department of Electrical Engineering, Technion, Email: {\tt ariel@ee.technion.ac.il}}
    Mostafa Dehghan$^{\dag}$
}

\maketitle

\begin{abstract}
There is a rich recent literature on how to assist secure communication
between a single transmitter and receiver at the physical layer of
wireless networks through techniques such as cooperative jamming.
In this paper, we consider how these single-hop physical layer
security techniques can be extended to multi-hop wireless networks
and show how to augment physical layer security techniques with higher
layer network mechanisms such as coding and routing. Specifically, we
consider the secure minimum energy routing problem, in which the
objective is to compute a minimum energy path between two network
nodes subject to constraints on the end-to-end communication
secrecy and goodput over the path.  This problem is formulated
as a constrained optimization of transmission power and link
selection, which is proved to be \mbox{NP-hard}. Nevertheless,
we show that efficient algorithms exist to compute both exact and
approximate solutions for the problem. In particular, we develop an
exact solution of pseudo-polynomial complexity, as well as an $\epsilon$-optimal
approximation of polynomial complexity. Simulation results are also
provided to show the utility of our algorithms and quantify their
energy savings compared to a combination of (standard) security-agnostic
minimum energy routing and physical layer security.  In the
simulated scenarios, we observe that, by jointly optimizing
link selection at the network layer and cooperative jamming at the
physical layer, our algorithms reduce the network energy consumption
by half.
\end{abstract}

\begin{IEEEkeywords}
Wireless security, minimum energy routing, cooperative jamming.
\end{IEEEkeywords}

\IEEEpeerreviewmaketitle

\section{Introduction}
\label{sec:intro}

Protecting the secrecy of user messages has become a major concern
in modern communication networks.  Due to the propagation properties
of the wireless medium, wireless networks can potentially make the
problem more challenging by allowing an eavesdropper to have
relatively easy access to the transmitted message if countermeasures
are not employed.  Our goal is to provide everlasting
security in this wireless environment; that is, we will consider
methods that will prevent an eavesdropper from ever decoding a
transmitted message - even if the eavesdropper has the capability
to record the signal and attempt decryption over many years (or decades).
There are two different classes of security techniques of interest
here:  cryptographic approaches based on computational complexity,
and information-theoretic approaches that attempt to obtain
perfect secrecy.  Both have advantages and disadvantages
for the desired everlasting security in the wireless environment.

The traditional solution to providing security in a wireless
environment is the cryptographic approach: assume that the
eavesdropper will get the transmitted signal without distortion,
but the desired recipient who shares a key with the transmitter
is able to decode the message easily, while the eavesdropper
lacking the key must solve a \emph{hard} problem that is beyond
her/his computational capabilities~\cite{stinson2006cryptography}.
Since the eavesdropper is assumed to get the transmitted signal
without distortion, cryptography addresses the key challenge in the
wireless environment of thwarting an eavesdropper very near the
transmitter.  However, such an approach faces the concern that the
eavesdropper can store the signal, and, then, with later advances in
computational capabilities or by breaking the encryption scheme, obtain
the message.  The desire for everlasting security then motivates
adding countermeasures at the physical layer that inhibit even the
recording of the encrypted message by the eavesdropper that combine
with cryptography to facilitate a defense-in-depth approach~\cite{NSA_defense}.

In the information-theoretic approach to obtain perfect
secrecy~\cite{shannon1949communication}, the goal is to guarantee
that the eavesdroppers can never extract information from the
message, regardless of their computational capability.
Wyner~\cite{wyner1975wire} and succeeding
authors~\cite{leung1978gaussian,csiszar1978broadcast} showed that
perfect secrecy is possible if the channel conditions between the
transmitter and receiver were favorable relative to the channel
conditions between the transmitter and eavesdropper. In this
so-called \emph{wiretap} channel, perfect secrecy at a positive rate
with no pre-shared key is possible~\cite{wyner1975wire}.  This
clearly satisfies the requirement for everlasting secrecy, but it
relies on favorable channel conditions that are difficult (if not
impossible) to guarantee in a wireless environment.  Hence,
information-theoretic secrecy requires a network design which
inhibits reception at the eavesdropper while supporting reception at
the desired recipient.

Our work supports both a cryptographic (computational) approach or
information-theoretic approach.  Per above, it is advantageous in
either case to seek or create conditions so as to inconvenience
reception at eavesdropper(s) while facilitating communication of the
legitimate system nodes. This has been actively considered in the
literature on the physical layer of wireless networks over the last
decade, with approaches based on both
opportunism~\cite{maurer93,bloch2008} and active channel
manipulation~\cite{goel08,tekin2008general} being employed. Most of
these works have arisen in the information-theoretic community and
considered small networks consisting of a source, destination,
eavesdropper, and perhaps a relay
node(s)~\cite{bloch2008,goel08,tekin2008general,dong10,elgamal08,
Dennis2011JSAC}. More recently, there has been the active
consideration of large networks with the introduction of the secrecy
graph to consider secure
connectivity~\cite{haenggi2008secrecy,pinto2008physical,pinto2009wireless}
and a number of approaches to throughput scaling versus security
tradeoffs~\cite{liang2009secrecy, koyluoglu2010secrecy,
vasudevan2010security}.  Hence, whereas there has been a significant
consideration of small single- and two-hop networks and
asymptotically large multi-hop networks, there has been almost no
consideration of the practical multi-hop networks that lie between
those two extremes. It is this large and important gap that this
paper fills.

Consider a network where system nodes communicate with each other
wirelessly, possibly over multiple hops, such as in wireless mesh
networks and ad hoc networks. A set of {\em eavesdroppers} try to passively listen
to communications among legitimate network nodes. To prevent the
eavesdroppers from successfully capturing communications between
legitimate nodes, mechanisms to thwart such are employed at the
physical layer of the network. Two nodes that wish
to communicate securely may need to do so over multiple hops in
order to thwart eavesdroppers or simply because the nodes are
not within the reach of each other. While we make no argument about
the optimality or practicality of any specific physical layer
security mechanism, for the sake of concreteness, we focus on
\emph{cooperative jamming}, which has received considerable attention
~\cite{goel08,tekin2008general,dong10,elgamal08,Dennis2011JSAC,swindle11}.
In cooperative jamming, whenever a node transmits a message, a
number of cooperative nodes, called {\em jammers}, help the node conceal its
message by transmitting a carefully chosen signal to raise the
background \emph{noise} level and degrade the eavesdropping channels.
Because our general philosophy applies to any physical layer
approach, the framework can be extended to include other forms
of physical layer security.  However, some of the attractive features
of cooperative jamming that motivated us to study this technique include:
\begin{enumerate}
\item Opportunistic techniques~\cite{maurer93,bloch2008} that exploit the time-varying wireless channel may suffer from excessive delays depending on the rate of channel fluctuations. For applications that require security without an excessive delay, active channel manipulation such as cooperative jamming should be adopted. The price to be paid, in this case, is the increased interference due to jamming.

\item Multi-antenna systems can also be used to jam eavesdroppers~\cite{yates07,goel08}. However, the use of multiple antennas on every wireless device may not be feasible due to cost and size (\eg\ wireless sensors). Cooperative jamming is a distributed alternative to multi-antenna systems.

\item Node cooperation, while requiring a more complex physical layer, is incorporated in commercial wireless technologies such as LTE. Thus, we envision that cooperative jamming can be implemented in practice, as was demonstrated in a limited form (single jammer) in~\cite{katabi11}.

\item Anonymous wireless communication is a challenging problem. Cooperative jamming can potentially be utilized for wireless anonymous communication, as it creates confusion for wireless localization techniques~\cite{banerjee11}.
\end{enumerate}

In this general case, the main questions are: (1) how to choose the
intermediate nodes that form a multi-hop path from the source node to
the destination node, and (2) how to configure each hop at the
physical layer with respect to the security and throughput
constraints of the path. Specifically, the problem we consider in
this paper is how to find a {\em minimum cost} path between a source
and destination node in the network, while guaranteeing a
pre-specified lower bound on the {\em end-to-end secrecy} and {\em
goodput} of the path. The cost of a path can be defined in terms of
various system parameters. In a wireless network, transmission power
is a critical factor affecting the throughput and lifetime of the
network. While increasing the transmission power results in increased
link throughput, excessive power actually results in high levels of
interference, hence reducing the network throughput due to
inefficient spacial reuse. With cooperative jamming at the physical
layer, transmission power is even more important due to the
additional interference caused by jamming signals if they need to be
employed. Thus, in this work, we consider the amount of end-to-end
transmission power as the cost of a path with the objective of
finding secure paths that consume the least amount of energy. In
turn, such paths, by minimizing interference in the network, result
in \emph{higher} throughput.  Note that solutions employing power
only at the nodes transmitting the messages (and no cooperative
jamming) are part of the space over which the optimization will be
performed; thus, if it is more efficient to not employ cooperative
jamming, such a solution will be revealed by our algorithms.

While it might seem that physical layer security techniques can be
extended to multi-hop networks by implementing them on a
hop-by-hop basis, in general, such extensions sacrifice performance
or are not feasible. The eavesdropping probability on a link
is a function of the power allocation on that link.  A hop-by-hop
implementation is unable to determine the optimal eavesdropping
probability and consequently power allocation for each link in
order to satisfy the end-to-end constraints (\ie\ the chicken-egg problem).
Moreover, a hop-by-hop approach overlaid on a shortest path routing algorithm
might pay an enormous penalty to mitigate eavesdroppers on some links (\eg\
by routing through a node with one or more links, that, because
of system geometry, are very vulnerable to nearby eavesdroppers).
A routing algorithm that is designed in conjunction with
physical layer security can selectively employ links that are
easier to secure when it is power-efficient to do so and, in
such a way, minimize the impact of the security constraint on
end-to-end throughput.

Our main contributions can be summarized as follows:
\begin{itemize}
\item We formulate the secure minimum energy routing problem with end-to-end security and goodput constraints as a constrained
optimization of transmission power at the physical layer and link
selection at the network layer.

\item We prove that the secure minimum energy routing problem is \mbox{NP-hard}, and develop
exact and $\epsilon$-approximate solutions of, respectively, pseudo-polynomial and fully-polynomial time complexity for the problem.

\item We show how cooperative jamming can be used to establish a secure link between two nodes in the presence of multiple eavesdroppers or probabilistic information about potential eavesdropping locations by utilizing random linear coding at the network layer.

\item We provide simulation results that demonstrate the significant energy savings
of our algorithms compared to the combination of security-agnostic minimum energy
routing and physical layer security.
\end{itemize}

The rest of the paper is organized as follows. Our system model is
described in Section~\ref{sec:sysmodel}. The optimal link and path
cost are analyzed in Sections~\ref{sec:linkcost}
and~\ref{sec:pathcost}. Our routing algorithms are presented in
Section~\ref{sec:algorithms}. Simulation results are discussed in
Section~\ref{sec:simulation}. Section~\ref{sec:related} presents an
overview of some related work, while Section~\ref{sec:conclusion}
concludes the paper.

\section{System Model and Assumptions}
\label{sec:sysmodel}

Consider a wireless network with arbitrarily distributed nodes.
We assume that each node (legitimate or eavesdropper) is equipped
with a single omni-directional antenna.
A $K$-hop route $\Pi$ between a source and a destination in the
network is a sequence of $K$ links connecting the source to the
destination\fnote{Terms ``path'' and ``route'' are used
interchangeably throughout the paper.}. We use the notation $\Pi =
\angles{\ell_1, \ldots, \ell_K}$ to refer to a route that is formed
by $K$ links $\ell_1$ to $\ell_K$. A link $\ell_k \in \Pi$ is formed
between two nodes $S_k$ and $D_k$ on route $\Pi$. We assume that
every link $\ell_k$ is exposed to a set of (potential) eavesdroppers
denoted by $\mc{E}_k$. Whenever $S_k$ transmits a message to $D_k$,
a set of trusted nodes, called jammers, cooperate with $S_k$ to
conceal its message from the eavesdroppers in $\mc{E}_k$ by jamming
$S_k$'s signal at the eavesdroppers. The set of the jammers
cooperating with $S_k$ to conceal its transmissions from $\mc{E}_k$
is denoted by $\mc{J}_k = \set{J_1\ldots, J_{|\mc{J}_k|}}$, where $|\mc{A}|$
denotes the cardinality of set $\mc{A}$. The set of
jammers is potentially different for different links. Throughout the
paper, we use the notation $\enc{S_k, D_k, \mc{E}_k,
\mc{J}_k}$ to identify link $\ell_k$.

In the following subsections, we describe the models considered in
this paper for the wireless channel, eavesdroppers, physical-layer
security and end-to-end routing. For notational simplicity, we may
drop the link index $k$ whenever there is no ambiguity.

\subsection{Wireless Channel Model}
Consider the discrete-time equivalent model for a transmission from node
$S$ to node $D$. Let $x_S$ be the normalized (unit-power) symbol
stream to be transmitted by $S$, and let $y_D$ be the received
signal at node $D$. We assume that transmitter $S$ is able to
control its power $P_S$ in arbitrarily small steps, up to some
maximum power $\pmax$. Let $n_D$ denote the receiver noise at $D$, where
$n_D$ is assumed to be a complex Gaussian random variable with
$\mathbb{E}\left[|n_D|^2\right] = N_0$. The received signal at $D$
is expressed as
\begin{equation}
    y_D = \sqrt{P_S} \, h_{S,D} \, x_S + n_D,
\end{equation}
where $h_{S,D}$ is the complex channel gain between $S$ and $D$. The
channel gain is modeled as $h_{S,D} = |h_{S,D}|e^{\theta_{S,D}}$,
where $|h_{S,D}|$ is the channel gain magnitude and $\theta_{S,D}$
is the uniform phase. We assume a non line-of-sight environment, implying
that $|h_{S,D}|$ has a Rayleigh distribution, and that
$\mathbb{E}[|h_{S,D}|^2] = 1/d_{S,D}^\alpha$, where  $d_{S,D}$ is the
distance between nodes $S$ and $D$,
and $\alpha$ is the path-loss exponent (typically between $2$ and
$6$). This is the standard narrowband fading channel model
employed in the physical layer literature~\cite{tse05wireless,goldsmith2005}.

\subsection{Adversary Model}
We limit our attention to passive eavesdroppers as in prior
work~\cite{goel08,tekin2008general,dong10,elgamal08,Dennis2011JSAC,swindle11}.
Although there are other forms of adversarial behavior, their
consideration is beyond the scope of this paper.

While the literature on physical layer security often assumes not
only eavesdropper locations but also either perfect
(\eg~\cite{tekin2008general}) or imperfect (\eg~\cite{swindle11})
knowledge at the transmitters and jammers of the complex channel
gains of the eavesdropping channels (\ie\ availability of
instantaneous eavesdropper channel state information (CSI)), we
consider the more realistic scenario, in which CSI for eavesdropping
channels is not available. In addition, our model requires only the
knowledge of {\em potential} eavesdropping locations in the network,
yet we show that it provides guaranteed security by employing coding
in conjunction with cooperative jamming.

Specifically, we assume that each link $\ell_k$ is subject to
potential eavesdropping from a set of locations denoted by $\mc{E}_k
= \set{E_1, \ldots, E_{|\mc{E}_k|}}$, where the probability of
eavesdropping from location $E_i$ is given by $p(E_i)$ for $0 \le
p(E_i) \le 1$. This is a considerably general model that can be used
to represent a wide range of eavesdropping scenarios\fnote{Although
our model cannot be applied to every possible scenario, it is more
general compared to the models in the literature on physical layer
security
(see~\cite{goel08,tekin2008general,dong10,elgamal08,Dennis2011JSAC,swindle11},
and references therein).}. For example, setting all $p(E_i)$'s to $1$
for a link models multiple eavesdroppers for that link.  Other
examples include, for example, military scenarios where the locations
of enemy installations are known, or wireless networks where a
malicious user(s) has been detected.  In general, for any given link,
there is only a limited region around the link that can be exploited
for eavesdropping. By dividing the effective eavesdropping region to
a few smaller areas~\cite{tessellation}, one can compute the most
effective eavesdropping location within each area, and consequently,
construct the set of eavesdropping locations for that link.

\subsection{Physical Layer Security Model}
Consider a secure link formed between source $S$ and receiver $D$
with the help of jammers $\mc{J}$.  For the moment, we assume that
cooperative jamming is implemented at the physical layer to deal with
a \emph{single} eavesdropper $E$ located at a \emph{fixed} position.
Later, in Section~\ref{sec:linkcost}, we show how this physical layer
primitive can be used to provide security against multiple
eavesdroppers or unknown eavesdropping locations.

When node $S$ transmits a message, there are multiple ways in which
cooperative jamming by system nodes can be exploited, ranging from
relatively simple noncoherent techniques to sophisticated beamforming
techniques~\cite{mostafa_twireless}.  Since the {\em implementation}
of beamforming in other contexts, with the same challenges of
synchronization in the wireless environment, is advancing rapidly
\cite{beamform_overview,beamform_implementation}, we assume that the
jammers cooperatively \emph{beamform} a common artificial noise
signal $z$ to the receiver in such a way that their signals cancel
out at the receiver \cite{brown_null}.  The noise signal $z$ is
transmitted in the \emph{null~space} of the channel vector
    $\mx{h}_D = [h_{J_1,D}, h_{J_2,D}, \ldots, h_{J_{|\mc{J}|},D}]^\mathrm{T}$
where, $h_{J_i,D}$ denotes the channel gain between jammer $J_i$ and
destination $D$ and $\mx{A}^\mathrm{T}$ denotes the conjugate transpose of vector
$\mx{A}$. Thus, the signal transmitted by the jammers can be
expressed as
    $\mathbf{s}_J = \mathbf{h}_D^\perp \, z$,
where $\mathbf{h}_D^\perp$ is a vector chosen in the null
space of $\mathbf{h}_D$. It follows that the total transmission
power of the jammers is given by $P_J = \parallel\mathbf{h}_D^\perp\parallel^2$.
Assuming that the source node transmits with power $P_S$, the
signals received at the destination and the eavesdropper are given
by
\begin{equation}
\begin{split}
\label{eq:rsignals}
    y_D &= \sqrt{P_S} \, h_{S,D} \, x_S + n_D,\\
    y_E &= \sqrt{P_S} \, h_{S,E} \, x_S + \mathbf{h}_E^\mathrm{T} \mathbf{h}_D^\perp z + n_E,
\end{split}
\end{equation}
where, $\mathbf{h}_E = [h_{J_1,E}, h_{J_2,E}, \ldots,
h_{J_{|\mc{J}|},E}]^\mathrm{T}$ represents the channel gain vector between
the jammers and the eavesdropper, and $n_D$ and $n_E$ denote the
complex Gaussian noise at the destination and eavesdropper,
respectively, with $\E{|n_D|^2} = \E{|n_E|^2} = N_0$.

Although the jammers try to prevent the eavesdropper from
successfully receiving the message, there is still some probability
that the eavesdropper actually obtains the message due to the fact
that the channel to the eavesdropper is \emph{unknown} in our model, \ie\ $\mx{h}_E$ and
$h_{S,E}$ are unknown. Recalling that the signal-to-interference-plus-noise
ratio ($\sinr$) at the destination is controlled via power control, let
$\gamma_E$ denote the minimum required $\sinr$ at the eavesdropper
in order to violate the security constraints of the protocol
(e.g. for the cryptographic case, the $\sinr$ above which the eavesdropper
can record a meaningful version of the transmitted signal; in the
information-theoretic case, the $\sinr$ above which, for a given wire-tap code,
the equivocation does not equal the entropy of the message.)  Let $\sinr_E$
denote the $\sinr$ at the eavesdropper.  We have
\begin{equation}
\label{eq:success_beaming}
\begin{split}
  &\pr{ \sinr_E \ge \gamma_E}
    =
      \pr{
          \tfrac{ P_S|h_{S,E}|^2 }{ N_0 + \mathbf{h}_E^\mathrm{T}  \mathbf{h}_D^\perp {\mathbf{h}_D^\perp}^\mathrm{T} \mathbf{h}_E }
          \ge \gamma_E }\\
    &\qquad= \mathbb{E}_{\mathbf{h}_E}
        \left[
            \pr{
                \frac{ P_S|h_{S,E}|^2 }{ N_0 + \mathbf{h}_E  \mathbf{h}_D^\perp {\mathbf{h}_D^\perp}^\mathrm{T} \mathbf{h}_E^\mathrm{T} }
                \ge \gamma_E \Big|\, \mathbf{h}_E  }
        \right] \\
    &\qquad= \mathbb{E}_{\mathbf{h}_E}
        \big[
            e^{- \frac{\gamma_E d_{S,E}^\alpha}{P_S} \mathbf{h}_E^\mathrm{T} \mathbf{h}_D^\perp {\mathbf{h}_D^\perp}^\mathrm{T} \mathbf{h}_E }
        \big]
        e^{- \frac{\gamma_E N_0 d_{S,E}^\alpha}{P_S}} ,
 \end{split}
\end{equation}
where $\mathbb{E}_{\mathbf{h}_E}$ means expectation with respect to
channel gain vector $\mathbf{h}_E$. Using the results on quadratic forms~\cite[Eq.~14]{turin} to calculate the expectation, it is obtained that ($\mathbf{I}_N$ is
the identity matrix of size $N$)
\begin{equation}
\label{eq:success}
\begin{split}
  \pr{ \sinr_E \ge \gamma_E}
    &\le \frac{ e^{- \frac{\gamma_E N_0 d_{S,E}^\alpha}{P_S}} }
        { |\mathbf{I}_N + \frac{\gamma_E d_{S,E}^\alpha}{P_S}(\sum_{J_i \in \mc{J}}{\frac{1}{d_{J_i,E}^\alpha}}) \mathbf{\mathbf{h}_D^\perp {\mathbf{h}_D^\perp}^\mathrm{T}}| } \\
    &= \frac{ e^{ -N_0 \gamma_E \frac{d_{S,E}^\alpha}{P_S} } }
            { 1 + \frac{\gamma_E d_{S,E}^\alpha}{P_S}(\sum_{J_i \in \mc{J}}{\frac{1}{d_{J_i,E}^\alpha}}) P_J }
            \eqend
\end{split}
\end{equation}
where the final expression is derived from Sylvester's determinant
theorem:
\[\det(\mathbf{I}_m + \mathbf{A}\mathbf{B}) = \det(\mathbf{I}_n + \mathbf{B}\mathbf{A} ),\]
for $\mathbf{A}$ and $\mathbf{B}$ being $m\times n$ and $n\times m$
matrices, respectively, and the fact that $P_J =
{\mathbf{h}_D^\perp}^\mathrm{T} \mathbf{h}_D^\perp$ (see
\eqref{eq:pj1}).

In the remainder of the paper, we use~\eqref{eq:success} in equality form to compute the eavesdropping probability for a given jamming power $P_J$. While this results in a (slightly)
conservative power allocation, it is sufficient to satisfy the security requirement of
each link.

\subsection{Routing Model}
Consider a $K$-hop route $\Pi = \angles{\ell_1, \ldots, \ell_K}$ between a
legitimate source and destination in the network. Let $\mc{L}$ denote the set
of all possible routes between the source and destination. Let
$\mc{C}(\Pi)$ denote the cost of route $\Pi$, where the cost of a
route is defined as the summation of the costs of the links forming the
route. With slight abuse of the notation, we use $\mc{C}(\ell_k)$ to
denote the cost of link $\ell_k$ as well. The secure routing problem
is then defined as follows.

\bigskip\noindent {\bf SMER: Secure Minimum Energy Routing Problem}\\
{\em Consider a wireless network and a set of eavesdroppers
distributed in the network. Given a source and destination, find a
minimum energy path $\Pi^*$ between the source and destination
subject to constraints $\pi$ and $\lambda$ on the end-to-end successful eavesdropping
probability and goodput on the path respectively.}
\bigskip

Let $\lambda(\Pi)$ and $\lambda(\ell_k)$ denote, respectively, the goodput
of path $\Pi$ and link $\ell_k \in \Pi$. Then $\lambda(\Pi)$ can be expressed as
    \[ \lambda(\Pi) = \min_{\ell_k \in \Pi} \lambda(\ell_k) \eqend\]
Since goodput of a link is an increasing function of the transmission
power of the transmitter of that link, a necessary condition for minimizing power over the path $\Pi$
is given by $\lambda(\ell_k) = \lambda$, for all $\ell_k \in \Pi$, \ie\
all links should just achieve the minimum goodput $\lambda$. Thus, our power allocation scheme (see Section~\ref{sec:linkcost}) establishes links that achieve exactly the minimum required goodput $\lambda$. Consequently, the constraint on the end-to-end goodput is satisfied by any path in the network, and hence does not need to be explicitly considered when solving SMER.
As such, SMER can be formally described by the following optimization problem:
\begin{equation}
\begin{split}
\label{eq:smer}
    \Pi^* = \, &\displaystyle\arg \min_{\Pi \in \mc{L}}  \sum_{\ell_k \in \Pi} \mc{C}(\ell_k) \\
                &s.t. \quad \pr{\text{eavesdropping on route $\Pi$}} \le \pi,
\end{split}
\end{equation}
for some pre-specified $\pi$ ($0 < \pi < 1$). The constraint on
the route eavesdropping probability in the above optimization
problem can be expressed in terms of the eavesdropping probability
on individual links $\ell_k$ that form the route $\Pi$, as
        $\prod_{\ell_k \in \Pi} (1 - \pi_k) \ge 1-\pi$,
where $\pi_k$ ($ 0 < \pi_k < 1$) denotes the successful
eavesdropping probability on link $\ell_k$. We use the following
result to convert the above inequality constraint to an equality
constraint in the routing problem~\eqref{eq:smer}.

\begin{lemma}
The cost of route $\Pi$ is a monotonically
increasing function of
$\prod_{\ell_k \in \Pi} (1 - \pi_k)$.
\end{lemma}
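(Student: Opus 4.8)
\emph{Proof plan.} I would keep the route $\Pi=\angles{\ell_1,\ldots,\ell_K}$ fixed; the quantities still free are the per-link successful eavesdropping probabilities $\pi_1,\ldots,\pi_K$, set by the power/jamming allocation on each link. Write $c_k(\pi_k)$ for the minimum cost (source power plus jamming power) that establishes $\ell_k$ with successful eavesdropping probability at most $\pi_k$ while delivering goodput $\lambda$ --- that is, exactly the link cost derived in Section~\ref{sec:linkcost}. Putting $q:=\prod_{\ell_k\in\Pi}(1-\pi_k)$, I would read ``the cost of route $\Pi$'' as the function
\[
  g(q):=\min\left\{\sum_{\ell_k\in\Pi}c_k(\pi_k)\ :\ \prod_{\ell_k\in\Pi}(1-\pi_k)=q,\ \pi_k\in(0,1)\right\},
\]
and the claim becomes: $g$ is non-decreasing on its domain. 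This is precisely what allows the inequality $\prod_{\ell_k\in\Pi}(1-\pi_k)\ge 1-\pi$ in \eqref{eq:smer} to be tightened to the equality $\prod_{\ell_k\in\Pi}(1-\pi_k)=1-\pi$, since then $\min_{q\ge 1-\pi}g(q)=g(1-\pi)$ (weak monotonicity already suffices for this).

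The first step is to show that each link cost $c_k$ is non-increasing in $\pi_k$. This is the qualitative content of the link-cost analysis of Section~\ref{sec:linkcost}: because the cooperative jammers null their signal at $D_k$, the goodput at $D_k$ depends only on the source power, so imposing $\lambda(\ell_k)=\lambda$ fixes that power independently of jamming and leaves the jamming power $P_{J_k}$ as the only remaining degree of freedom; since by \eqref{eq:success} (used in equality form) the eavesdropping probability is strictly decreasing in $P_{J_k}$, driving $\pi_k$ down can only raise $P_{J_k}$, hence the total link cost. Equivalently, relaxing $\pi_k$ upward never increases $c_k$.

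Given this, I would finish by a single-link adjustment. Fix $q'<q$ in the domain and let $\set{\pi_k^\star}$ attain $g(q)$, so $\prod_{\ell_k\in\Pi}(1-\pi_k^\star)=q$. Keeping $\pi_2^\star,\ldots,\pi_K^\star$ unchanged, replace $\pi_1^\star$ by the $\pi_1'$ determined by $1-\pi_1'=(1-\pi_1^\star)\,q'/q$. Then $1-\pi_1'\in(0,\,1-\pi_1^\star]\subseteq(0,1)$, so $\pi_1'\in(0,1)$ is admissible and $\pi_1'\ge\pi_1^\star$, while $\prod_{\ell_k\in\Pi}(1-\pi_k)$ rescales from $q$ to $q'$. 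By the first step $c_1(\pi_1')\le c_1(\pi_1^\star)$ and the remaining link costs are untouched, whence $g(q')\le c_1(\pi_1')+\sum_{k\ge 2}c_k(\pi_k^\star)\le\sum_{\ell_k\in\Pi}c_k(\pi_k^\star)=g(q)$. Hence $g$ is non-decreasing, as claimed; the uniform rescaling $1-\pi_k'=(1-\pi_k^\star)(q'/q)^{1/K}$ works equally well, and the inequalities become strict wherever the link costs are strictly decreasing (i.e., away from the jamming-free regime).

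The main obstacle --- and really the only place a careful argument is needed --- is the first step: one must be sure there is no hidden way to trade source power against jamming power that would beat $c_k(\pi_k)$ (equivalently, that $P_{S_k}$ is genuinely pinned by the goodput constraint and that \eqref{eq:success} is monotone in $P_{J_k}$), and that degenerate links --- one so favorable that no jamming is needed, or the limit $\pi_k\downarrow 0$ demanding unbounded power --- do not spoil the monotonicity of $c_k$. None of this is deep once the link-cost expression of Section~\ref{sec:linkcost} is in hand; the single-link adjustment is then the clean core of the proof.
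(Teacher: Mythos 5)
Your proposal is correct and follows essentially the same route as the paper's proof: first establish that each link's cost is monotone in its secrecy $\omega_k=1-\pi_k$ (source power pinned by the goodput constraint, jamming power increasing in $\omega_k$), then rescale a single link's secrecy by the ratio of the two products to compare route costs. The only difference is presentational --- you argue directly that $g(q')\le g(q)$ whereas the paper phrases the same single-link adjustment as a contradiction --- and your version is, if anything, the cleaner of the two.
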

\begin{proof}
Consider a path $\Pi$ between the source and destination nodes.
Define the end-to-end secrecy of path $\Pi$, denoted by
$\omega(\Pi)$, as follows:
\begin{equation}
    \omega(\Pi) = \prod_{\ell_k \in \Pi} \omega_k,
\end{equation}
where $\omega_k = (1 - \pi_k)$.

First, we show that the link cost $\mc{C}(\ell_k)$ is a
monotonically increasing function of the the link secrecy $\omega_k$
for every link $\ell_k \in \Pi$. Let $P_S^{(k)}$ and $P_J^{(k)}$
denote the source and jamming powers allocated to the link $\ell_k$,
respectively. In Section~\ref{sec:linkcost}, we show that: (i)
$P_S^{(k)}$ is a constant for a given link independent of the link
secrecy, and (ii) $P_J^{(k)}$ is a function of the link secrecy and
is given by
\begin{equation}
    P_J^{(k)} = c_k \cdot \frac{\omega_k}{1 - \omega_k},
\end{equation}
where $c_k$ is some constant independent of $\omega_k$. Thus, for a
fixed link $\ell_k$, the link cost $\mc{C}(\ell_k) = P_S^{(k)} +
P_J^{(k)}$ depends on $\omega_k$ only through the jamming power
$P_J^{(k)}$. Taking the derivative on the link cost with respect to
$\omega_k$ results in the following relation:
\begin{equation}
    \frac{d}{d \omega_k} \, \mc{C}(\ell_k) = c_k \cdot \frac{1}{(1 - \omega_k)^2} > 0,
\end{equation}
indicating that the link cost is an increasing function of the link
secrecy $\omega_k$.

Let $\mc{C}^*(\Pi)$ and $\mc{C}(\Pi)$ denote the optimal cost of the
path $\Pi$ computed by solving the optimization
problem~\eqref{eq:smer-p}, with equality and inequality constraints,
respectively. Furthermore, let $\omega^*(\Pi)$ and $\omega(\Pi)$
denote the corresponding end-to-end path secrecies. We present a
proof of the lemma based on contradiction by assuming that the
optimal path cost with the inequality constraint is less than that
with the equality constraint. That is, we assume that
\begin{equation}
    \mc{C}(\Pi) \le \mc{C}^*(\Pi),
\end{equation}
while,
\begin{equation}
    \omega(\Pi) > \omega^*(\Pi) \eqend
\end{equation}

Next, by manipulating the link secrecy allocation vector $[\omega_1,
\ldots, \omega_k, \ldots, \omega_K]$, we construct a new link
secrecy allocation vector that satisfies the equality constraint,
while having a cost smaller than $\mc{C}^*(\Pi)$. To this end,
consider some arbitrary link $\ell_k$, and replace $\omega_k$ by a
new $\omega'_k$ as follows
\begin{equation}
    \omega'_k = \frac{\omega^*}{\omega} \cdot \omega_k
    \eqend
\end{equation}
Since $\omega^* < \omega$, it follows that $\omega'_k < \omega_k$.
Consequently, the new cost of the link $\ell_k$ with link secrecy
$\omega'_k$ is less than $\mc{C}(\ell_k)$, which in turn indicates
that the new path cost with secrecy allocation vector $[\omega_1,
\ldots, \omega'_k, \ldots, \omega_K]$ is less than $\mc{C}(\Pi)$.
Therefore, we have
\begin{equation}
    \mc{C}'(\Pi) < \mc{C}(\Pi) < \mc{C}^*(\Pi),
\end{equation}
and,
\begin{equation}
    \omega(\Pi) > \omega'(\Pi) = \omega^*(\Pi) \eqend
\end{equation}
The proof follows by noting that this contradicts the assumption
that $\mc{C}^*(\Pi)$ is the minimum cost of path $\Pi$ with the
equality constraint.
\end{proof}

Thus, to minimize the cost of the optimal route, the inequality
constraint can be substituted by the equality constraint
$\prod_{\ell_k \in \Pi} (1 - \pi_k) = 1-\pi$. On each link $\ell_k$,
it is desirable to keep the successful eavesdropping probability
$\pi_k$ close to $0$. In this case, the product $\prod_{\ell_k \in
\Pi} (1 - \pi_k)$ can be approximated by the expression $1 -
\sum_{\ell_k \in \Pi} \pi_k$. By substituting the approximate
linearized constraint in the routing problem, the following
optimization problem is obtained
\begin{equation}
\begin{split}
\label{eq:smer-p}
    \Pi^* = \, &\displaystyle\arg \min_{\Pi \in \mc{L}}  \sum_{\ell_k \in \Pi} \mc{C}(\ell_k) \\
                     &s.t. \quad \displaystyle \sum_{\ell_k \in \Pi}
                     \pi_k = \pi
                     \eqend
\end{split}
\end{equation}
In the rest of the paper, we focus on
this optimization \mbox{problem}.
We first show that the problem is, in general,
NP-hard and then develop exact and approximate algorithms to
solve it.

\section{Secure Link Cost}
\label{sec:linkcost}

The link cost is composed of two components: (1) the source power, and (2)
the jammers' power. Let $\mc{C}(\ell_k)$ denote the cost of link
$\ell_k=\enc{S_k, D_k, \mc{E}_k, \mc{J}_k}$ under the constraint of
eavesdropping probability $\pi_k$. Then, $\mc{C}(\ell_k)$ is given by:
\begin{eqnarray}
    \mc{C}(\ell_k) = P_S^{(k)} + P_J^{(k)},
\end{eqnarray}
where $P_S^{(k)}$ and $P_J^{(k)}$ denote, respectively, the average
source and jammers power on link $\ell_k$. In the following
subsections, we will compute the optimal values of $P_S^{(k)}$ and
$P_J^{(k)}$ subject to a given $\pi_k$.

\subsection{Source Transmission Power}
Assume that the (complex) fading channel coefficient $h_{S_k,D_k}$ is
known at the source $S_k$ of the given link $\ell_k$. Because we are
trying to maintain a fixed rate (and, hence, a fixed received power),
the source will attempt to invert the channel using power control.
However, for a Rayleigh frequency-nonselective fading channel, as
assumed here, the expected required power for such an inversion goes
to infinity, and, hence {\em truncated channel inversion} is employed
\cite[Pg.~112]{goldsmith2005}. In truncated channel inversion, the
source maintains the required link quality except for extremely bad
fades, where the link goes into outage. When a link is in a bad fade,
the source will need to wait until the link improves before
transmitting the packet and delay will be incurred.  To limit the
delay, we maintain a given outage probability $\rho$ per link. Then,
for a given packet, we need to transmit at rate $R = \lambda / (1 -
\rho)$ to maintain the desired goodput $\lambda$. Associated with
that rate $R$ is the SINR threshold $\gamma_D = 2^{R}- 1$ required
for successful reception at the link
destination~\cite{tse05wireless}.

Let $P_S^{(k)}$ denote the average transmission power of $S_k$, and
let $P_S^{(k)}(|h_{S_k,D_k}|^2)$ denote the power used for a given
packet as a function of the power $|h_{S_k,D_k}|^2$ in the fading
channel between $S_k$ and $D_k$.  Per above, below some threshold
$\tau$, the source will wait for a better channel.  From the
Rayleigh fading model employed,
$|h_{S_k,D_k}|^2$ is exponential with parameter
$1/d_{S_k,D_k}^{\alpha}$; hence, $\tau = - \ln(1 - \rho) \cdot d_{S_k,D_k}^{\alpha}$ and truncated channel inversion yields:
\begin{equation}
    P_S^{(k)}(|h_{S_k,D_k}|^2) = \begin{cases}
                                    \frac{\gamma_D}{|h_{S_k,D_k}|^2} \cdot d_{S_k, D_k}^{\alpha},
                                             &       |h_{S_k,D_k}|^2 \geq \tau \\
                                     0, &       |h_{S_k,D_k}|^2 < \tau
                                 \end{cases}
\end{equation}
Then, the average power employed on the link is given by:
\begin{eqnarray}
\label{eq:ps}
    P_S^{(k)} & = & \frac{1}{1 - \rho} \int_{\tau}^{\infty} \frac{\gamma_D}{x} \cdot d_{S_k,D_k}^{\alpha}
            e^{-x} dx   \nonumber\\
          & = & \gamma_D d_{S_k,D_k}^{\alpha} \frac{1}{1 - \rho} \int_{\tau}^{\infty} \frac{e^{-x}}{x} dx \nonumber\\
          & = & \gamma_D \, k_{\rho} \, d_{S_k,D_k}^{\alpha},
\end{eqnarray}
where $k_{\rho}$ is a constant that depends on the parameter
${\rho}$.  Hence, for a fixed network parameter $\rho$ (which also
determines $\gamma_D$), the average power consumed on a given link
$\ell_k$ to achieve the secure goodput $\lambda$ is proportional to
$d_{S_k, D_k}^{\alpha}$.

\subsection{Jammers' Transmission Power}
Our physical layer security primitive described in
Section~\ref{sec:sysmodel} can provide security only against a single
eavesdropper at a fixed location. To achieve security in the presence
of multiple eavesdroppers or uncertainty about the location of
eavesdroppers, we utilize \emph{random linear coding}\fnote{Other
forms of coding, such as dividing each message to smaller
chunks~\cite{shamir79}, can be equally incorporated in our
algorithm.} on each link.

Consider link $\ell_k$ between transmitter $S_k$ and receiver $D_k$
with the associated set of potential eavesdropping locations
$\mc{E}_k=\set{E_1, \ldots, E_{|\mc{E}_k|}}$. Transmitter $S_k$
performs coding over $|\mc{E}_k|$ messages accumulated in its buffer
for transmission to $D_k$. To generate a coded message, $S_k$ selects
a random subset of the messages in its buffer and adds them together
(module-$2$). To recover the original messages, the receiver needs to
collect $|\mc{E}_k|$ linearly independent coded messages. In order to
transmit only linearly independent coded messages, $S_k$ keeps track
of the coded messages it has transmitted so far. Let $m_i$ denote the
$i$-th coded message that is being transmitted to $D_k$. To securely
transmit $m_i$, $S_k$ employs the cooperative jamming primitive of
Section~\ref{sec:sysmodel} assuming that there is an eavesdropper in
location $E_i$. Since each coded message is hidden from at least one
eavesdropping location, it is guaranteed that an eavesdropper located
at location $E_i$, for all $E_i \in \mc{E}_k$,  will not be able to
obtain any information about the original messages.

In the following subsections, we compute the optimal jamming power
per link. The derivation for the case of multiple eavesdroppers
relies on the jamming power computed for the single eavesdropper
case.

\subsubsection{Single Eavesdropper}
Because slow frequency non-selective fading is assumed and
the channel to the eavesdropper is unknown, there is some probability
that the eavesdropper will obtain the message by achieving a received
SINR greater than a threshold $\gamma_E$.
Let $\pi_k(|h_{S_k,D_k}|^2)$ denote the probability the eavesdropper
achieves SINR greater than threshold $\gamma_E$ for a given source to
destination channel $h_{S_k,D_k}$ (recall that the source power will
fluctuate as $h_{S_k,D_k}$ fluctuates, and this will impact the
interception probability at the eavesdropper). Because we want to
avoid placing limitations on the capabilities of the eavesdropper,
assume that the eavesdropper receiver is noiseless. Let $P_J^{(k)}$
and $P_J^{(k)}(|h_{S_k,D_k}|^2)$ denote the average and instantaneous
transmission power allocated to jammers in $\mc{J}_k$, respectively.
Then, using~\eqref{eq:success}, it is obtained that
\begin{equation*}
    \pi_k(|h_{S_k,D_k}|^2) =
    \quad \frac{1}
        {1 + \frac{\gamma_E d_{S_k, E_k}^{\alpha}}
    {P_S^{(k)}(|h_{S_k,D_k}|^2)}
        \big( \sum_{J_i \in \mc{J}_k} \frac{1}{d_{J_i, E_k}^{\alpha}} \big)
    P_J^{(k)}(|h_{S_k,D_k}|^2)} \eqend
\end{equation*}
Now, to maintain a given $\pi_k$, it is sufficient to
maintain $\pi_k(|h_{S_k,D_k}|^2) = \pi_k$ across all
$|h_{S_k,D_k}|^2$. Under this condition, recognizing that both
$P_S^{(k)}(|h_{S_k,D_k}|^2)$ and $P_J^{(k)}(|h_{S_k,D_k}|^2)$ are
proportional to $|h_{S_k,D_k}|^2$, we have:

\begin{equation}
\begin{split}
    P_J^{(k)}(|h_{S_k,D_k}|^2) =
    \frac{(1/\pi_k - 1) \, P_S^{(k)}(|h_{S_k,D_k}|^2)}
        {\gamma_E \, d_{S_k,E_k}^{\alpha} (\sum_{J_i \in \mc{J}_k} \frac{1}{d_{J_i, E_k}^{\alpha}})}
         ,
\end{split}
\end{equation}
and, taking expectations yields
\begin{equation}
\label{eq:P_J}
    P_J^{(k)} =  \frac{1/\pi_k - 1}
        {\gamma_E d_{S_k,E_k}^{\alpha} (\sum_{J_i \in \mc{J}_k} \frac{1}{d_{J_i,E_k}^{\alpha}})}
        P_S^{(k)} ,
\end{equation}
and,
\begin{equation}
\label{eq:pi1}
    \pi_k =  \frac{1}
        {1 + \frac{\gamma_E d_{S_k,E_k}^{\alpha}}{P_S^{(k)}} \big(\sum_{J_i \in \mc{J}_k} \frac{1}{d_{J_i, E_k}^{\alpha}}\big)P_J^{(k)} }
        \eqend
\end{equation}

\subsubsection{Multiple Eavesdroppers}
Recall that our objective is to compute the minimum jamming power for
the link.
Let $\pi_k(i)$ denote the successful eavesdropping probability on
link $\ell_k$ conditioned on having an eavesdropper at location
$E_i$. The unconditional eavesdropping probability $\pi_k$ on link
$\ell_k$ is then given by the approximate relation $\pi_k =
\sum_{E_i \in \mc{E}_k} p_k(E_i) \cdot \pi_k(i)$, where $p_k(E_i)$ is
the probability of having an eavesdropper at location $E_i$. Since
jamming power depends on the location of the eavesdroppers, by
optimally allocating jamming power to each potential eavesdropping
location, we can minimize the total jamming power across all
eavesdropping locations for a given link.

The minimum jamming power for link $\ell_k$ over all eavesdropping
locations $\mc{E}_k$ is given by the solution of the following
optimization problem:
\begin{equation}
\begin{split}
     &\min_{P_J^{(k)}(i)}  \sum_{E_i \in \mc{E}_k} P_J^{(k)}(i) \\
     &s.t. \quad
        \displaystyle\sum_{E_i \in \mc{E}_k} p_k(E_i) \cdot \pi_k(i) = \pi_k,
\end{split}
\end{equation}
where $P_J^{(k)}(i) = \sum_{J_j \in \mc{J}_k} P_j^{(k)}(i)$ is the
jamming power conditioned on the eavesdropping location $E_i$, \ie\
the jamming power during the transmission of the coded message $m_i$.
Define $\phi_k(i)$ as follows
\begin{equation}
    \phi_k(i) =  \frac{\gamma_E}{\gamma_S k_{\rho}} \big(\frac{d_{S_k, E_i}}{d_{S_k, D_k}}\big)^{\alpha} \sum_{J_j \in \mc{J}_k} \frac{1}{d_{J_j, E_i}^{\alpha}}
    \eqend
\end{equation}
After substituting for $\pi_k(i)$ using~\eqref{eq:pi1}, we obtain the
following optimization problem:
\begin{equation}
\label{e:multi}
\begin{split}
     &\min_{P_J^{(k)}(i)}  \sum_{E_i \in \mc{E}_k} P_J^{(k)}(i) \\
     &s.t. \quad
        \displaystyle\sum_{E_i \in \mc{E}_k}
            \frac{p_k(E_i)}
            {1 + \phi_k(i) P_J^{(k)}(i) }
        = \pi_k
        \eqend
\end{split}
\end{equation}
The optimization variables in this optimization problem are the
jamming powers $P_J^{(k)}(i)$. 
The Lagrangian for the link cost
optimization problem is expressed as follows
\begin{equation*}
\begin{split}
    &L(P_J^{(k)}(1), \ldots, P_J^{(k)}(|\mc{E}_k|), \nu)\\
        & = \sum_{E_i \in \mc{E}_k}
                P_J^{(k)}(i)
            + \nu
                \Big(
                    \displaystyle\sum_{E_i \in \mc{E}_k}
                    \frac{p_k(E_i)}
                    {1 + \phi_k(i) P_J^{(k)}(i) }   
                    - \pi
                \Big)
                \eqend
\end{split}
\end{equation*}
Using the Lagrange multipliers technique, it is obtained that
\begin{equation}
\label{eq:LL1}
    \frac{\partial L}{\partial P_J^{(k)}(i)}
        = 1 - \nu
                \frac{\phi_k(i) p_k(E_i)}
                    {(1 + \phi_k(i) P_J^{(k)}(i))^2} ,
\end{equation}
and,
\begin{equation}
\label{eq:LL2}
    \frac{\partial L}{\partial \nu}
        = \sum_{E_i \in \mc{E}_k}
                \frac{p_k(E_i)} {1 + \phi_k(i) P_J^{(k)}(i) }
           - \pi
           \eqend
\end{equation}
Using \eqref{eq:LL1}, we have
\begin{equation}
\label{eq:LL1-s}
    \frac{p_k(E_i)} {1 + \phi_k(i) P_J^{(k)}(i) }
    =
    \frac{\sqrt{p_k(E_i)}}{\sqrt{\nu \phi_k(i)}}
    \eqend
\end{equation}
By substituting in~\eqref{eq:LL2}, it follows that
\begin{equation}
    \sum_{E_i \in \mc{E}_k} \sqrt{\frac{p_k(i)}{\nu \phi_k(i)}} = \pi ,
\end{equation}
and, therefore,
\begin{equation}
   \frac{1}{\sqrt{\nu}}
        = \frac{\pi}{\sum_{E_i \in \mc{E}_k} \sqrt{\frac{p_k(i)}{\phi_k(i)}} }
        \eqend
\end{equation}
It is then obtained that
\begin{equation}
    \pi_k(i) = \frac{1}{\phi_k(i)} \frac{1/\sqrt{\frac{p_k(E_i)}{\phi_k(i)}}}
                 { \sum_{E_i \in \mc{E}_k} \sqrt{\frac{p_k(E_i)}{\phi_k(i)}}} \, \pi_k,
\end{equation}
and,
\begin{equation}
\label{eq:pji}
   P_J^{(k)}(i) =
            \frac{1}{\pi_k} \sqrt{\frac{p_k(E_i)}{\phi_k(i)}} \sum_{E_i \in \mc{E}_k} \sqrt{\frac{p_k(E_i)}{\phi_k(i)}} - \frac{1}{\phi_k(i)}
            \eqend
\end{equation}
%%MG% Can be removed if necessary
For a given link $\ell_k$ and eavesdropping probability $\pi_k$, we
can use~\eqref{eq:pji} to compute the optimal jamming power
allocation for each coded message $m_i$. Consequently, the average
jamming power per message on link $\ell_k$ is given by:
\begin{equation}
\label{eq:pj1}
\begin{split}
   P_J^{(k)} &= \frac{1}{|\mc{E}_k|} \sum_{E_i \in \mc{E}_k} P_J^{(k)}(i)\\
             &= \frac{1}{\pi_k} \frac{1}{|\mc{E}_k|} \Bigg( \sum_{E_i \in \mc{E}_k} \sqrt{\frac{p_k(E_i)}{\phi_k(i)}} \Bigg)^2
                - \frac{1}{|\mc{E}_k|} \sum_{E_i \in \mc{E}_k} \frac{1}{\phi_k(i)}
        \eqend
\end{split}
\end{equation}

\subsection{Discussion}
\subsubsection{Colluding Eavesdroppers}
While we considered the case of non-colluding
eavesdroppers here, our model can be extended to handle
colluding eavesdroppers by requiring that at least of the coded messages be protected against all
eavesdroppers. Let $\mc{E}_k=\set{E_1, \ldots, E_{|\mc{E}_k|}}$ denote the set of colluding eavesdroppers. Assume that on link $\ell_k$, $B_k$ messages are coded together for transmission, \ie\ $B_k$ is the length of the coding block. Then, the probability that a coded message $m$ is captured by all eavesdroppers is given by $\prod_{E_i \in \mc{E}_k} \pi_k(i)$. Thus, the probability that at least one message out of the $B_k$ coded messages is not received by all eavesdroppers is given by
\begin{equation}
    1 - \Big( \prod_{E_i \in \mc{E}_k} \pi_k(i) \Big)^{B_k}
        \eqend
\end{equation}
To satisfy the link eavesdropping constraint $\pi_k$, the following relation should be satisfied
\begin{equation}
    1 - \Big( \prod_{E_i \in \mc{E}_k} \pi_k(i) \Big)^{B_k} = \pi_k,
\end{equation}
which yields
\begin{equation}
    \prod_{E_i \in \mc{E}_k} \pi_k(i) = \sqrt[B_k]{\pi_k}
    \eqend
\end{equation}
This constraint can be used in the optimization problem~\eqref{e:multi} to compute the optimal link cost for the case of colluding eavesdroppers.

An interesting observation is that
\begin{equation}
    \lim_{B_k \rightarrow \infty} \pi_k(i) = 1, \quad \text{for all $E_i \in \mc{E}_k$}
    \eqend
\end{equation}
That is, by increasing the length of the coding block, the link cost can be significantly reduced. The cost to be paid is in terms of increased transmission delay.

\subsubsection{End-to-End Coding}
Rather than looking at individual links in isolation and then
performing hop-by-hop coding, we can perform coding on an end-to-end
basis only at the source node. Then by repeatedly finding paths that
are secure against single eavesdropping per link, the source can
securely communicate with the destination through multiple paths.
This approach is appropriate if there are only a few potential
eavesdropping locations in the network. If the maximum number of
eavesdropping locations per link is $m$, then the running time of
this approach is $m$ times that of the routing algorithm with single
eavesdropping location per link.

\section{Secure Path Cost}
\label{sec:pathcost} In this section, using the link cost formulation
of the previous section, we formulate the optimal cost of a
\emph{given} path $\Pi$ subject to an end-to-end eavesdropping
probability $\pi$.
The problem essentially is to divide $\pi$ across the links forming $\Pi$ so that the path cost is minimized.

\subsection{Optimal Path Cost}
Consider a given path $\Pi$. We find the optimal cost of path $\Pi$
by solving the optimization problem~\eqref{eq:smer-p}. Consider link
$\ell_k \in \Pi$, where $\ell_k = \enc{S_k, D_k, \mc{E}_k,
\mc{J}_k}$. Define $x_k$ and $y_k$ as follows:
\begin{align*}
    x_k = \frac{1}{\sqrt{|\mc{E}_k|}} \sum_{E_i \in \mc{E}_k} \sqrt{\frac{p_k(E_i)}{\phi_k(i)}},
\end{align*}
and,
\begin{align*}
    y_k = \frac{1}{|\mc{E}_k|} \sum_{E_i \in \mc{E}_k} \frac{1}{\phi_k(i)}
    \eqend
\end{align*}
Using the results obtained in the previous subsection, the following
relation holds:
\begin{align*}
    \pi_k =  \frac{x^2_k}{y_k + P_J^{(k)}}
    \eqend
\end{align*}
By substituting the above expressions in the optimal routing
formulation described in~\eqref{eq:smer-p}, the following
optimization problem is obtained for minimizing the cost
$\mc{C}(\Pi)$ of route $\Pi$:
\begin{equation}
\begin{split}
     &\displaystyle\min_{P_J^{(k)}}  \sum_{\ell_k \in \Pi} P_S^{(k)} + P_J^{(k)} \\
     &s.t. \quad
        \displaystyle\sum_{\ell_k \in \Pi}
                        \Big(
                         \frac{x^2_k} {y_k + P_J^{(k)}}
                        \Big)
                            = \pi
                            \eqend
\end{split}
\end{equation}
The optimization variables in this optimization problem are jamming
powers $P_J^{(k)}$. 
The Lagrangian for the routing
optimization problem is expressed as follows
\begin{equation*}
\begin{split}
    &L(P_J^{(1)}, \ldots, P_J^{(K)}, \nu)\\
        & = \sum_{\ell_k \in \Pi}
            \big( P_S^{(k)} + P_J^{(k)} \big)
            + \nu
                \Big(
                    \sum_{\ell_k \in \Pi}
                        \Big(
                         \frac{x^2_k} {y_k + P_J^{(k)}}
                        \Big)
                    - \pi
                \Big)
                \eqend
\end{split}
\end{equation*}
Using the Lagrange multipliers technique, it is obtained that
\begin{equation}
\label{eq:L1}
    \frac{\partial L}{\partial P_J^{(k)}}
        = 1 - \nu
                \frac{x^2_k} {(y_k + P_J^{(k)})^2} ,
\end{equation}
and,
\begin{equation}
\label{eq:L2}
    \frac{\partial L}{\partial \nu}
        = \sum_{\ell_k \in \Pi}
              \Big(
                 \frac{x^2_k} {y_k + P_J^{(k)}}
              \Big)
           - \pi
           \eqend
\end{equation}
Using \eqref{eq:L1}, we have
\begin{equation}
\label{eq:L1-s}
    \frac{x^2_k} {y_k + P_J^{(k)}}
    =
    \frac{x_k}{\sqrt{\nu}}
    \eqend
\end{equation}
By substituting in~\eqref{eq:L2}, it follows that
\begin{equation}
    \sum_{\ell_k \in \ell} \frac{x_k}{\sqrt{\nu}} = \pi ,
\end{equation}
and, therefore,
\begin{equation}
   \frac{1}{\sqrt{\nu}}
        = \frac{\pi}{\sum_{\ell_k \in \Pi} x_k}
        \eqend
\end{equation}
After substitution in~\eqref{eq:L1-s}, the following relation for
the optimal eavesdropping probability $\pi_k$ on link $\ell_k$ is
obtained
\begin{equation}
\label{eq:pi}
    \pi_k = \frac{x_k}
                {\sum_{\ell_i \in \Pi} x_i} \, \pi
                \eqend
\end{equation}
For a given route $\Pi$ and end-to-end eavesdropping probability
$\pi$, we can use \eqref{eq:pi} to divide $\pi$ between links
$\ell_k \in \Pi$. Having computed $\pi_k$, the optimal power allocated to
jammers on link $\ell_k$ is given by the following expression:
\begin{equation}
\label{eq:pj}
   P_J^{(k)} = \frac{1}{\pi} \cdot x_k \sum_{\ell_i \in \Pi} x_i - y_k
        \eqend
\end{equation}
Using the above expression for $P_J^{(k)}$, the cost of link $\ell_k
\in \Pi$ is expressed as
% \tfrac
\begin{equation}
\label{eq:lcost}
    \mc{C}(\ell_k)
        = \big( (\gamma_S k_{\rho}) \cdot d_{S_k, D_k}^{\alpha} - y_k \big)
            + \frac{1}{\pi} \big( x_k \sum_{\ell_i \in \Pi} x_i \big)
                \eqend
\end{equation}
Consequently, the cost of secure route $\Pi$ is given by:
\begin{equation}
\label{eq:pcost}
    \mc{C}(\Pi)
        = \sum_{\ell_k \in \Pi} \big( (\gamma_S k_{\rho}) \cdot d_{S_k, D_k}^{\alpha} - y_k \big)
            + \frac{1}{\pi} \big( \sum_{\ell_k \in \Pi} x_k \big)^2
            \eqend
\normalsize
\end{equation}
To this end, for a given route $\Pi$ between the source and
destination, the optimal cost of $\Pi$ subject to the end-to-end
eavesdropping constraint $\pi$ is given by~\eqref{eq:pcost}. The
optimal cost is achieved by allocating $P_S^{(k)}$ and $P_J^{(k)}$ to
each link $\ell_k \in \Pi$ using~\eqref{eq:ps} and \eqref{eq:pj},
respectively. Such a power allocation scheme would result in minimum
cost, while guaranteeing that the eavesdropping constraint would be
satisfied. Thus, SMER is reduced to finding a path, among all
possible paths between the source and destination, that minimizes the
optimal path cost~\eqref{eq:pcost}. The following proposition
formally states this result.
\begin{propose}
SMER with end-to-end eavesdropping and goodput constraint $\pi$ and
$\lambda$, respectively, is equivalent to finding a path that
minimizes the optimal path cost $\mc{C}(\Pi)$ as given
by~\eqref{eq:pcost}.
\end{propose}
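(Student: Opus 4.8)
The plan is to assemble the reductions established in Sections~\ref{sec:sysmodel} and~\ref{sec:linkcost} into a single chain of equivalences linking SMER to the minimization of~\eqref{eq:pcost}. First I would dispose of the goodput constraint: as argued following the definition of $\lambda(\Pi)$, since link goodput is increasing in the transmitter power, any power-minimizing solution has $\lambda(\ell_k)=\lambda$ on every link, and the truncated-channel-inversion allocation~\eqref{eq:ps} realizes exactly this goodput on each link with $P_S^{(k)}$ independent of the link secrecy. Hence every path in $\mc{L}$ is feasible with respect to $\lambda$, and SMER collapses to the power-minimization problem~\eqref{eq:smer} with only the end-to-end eavesdropping constraint active.

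Second, I would invoke Lemma~1 together with the linearization step to replace the multiplicative constraint $\prod_{\ell_k\in\Pi}(1-\pi_k)\ge 1-\pi$ by the additive equality constraint $\sum_{\ell_k\in\Pi}\pi_k=\pi$, yielding~\eqref{eq:smer-p}. The key structural observation is that this is a two-level optimization: an outer minimization over the discrete choice of path $\Pi\in\mc{L}$ and, for each fixed $\Pi$, an inner continuous minimization over how to split $\pi$ across the links of $\Pi$ --- equivalently, over the jamming powers $P_J^{(k)}$, since by~\eqref{eq:pi1} each $\pi_k$ is a strictly decreasing (hence invertible) function of $P_J^{(k)}$ once the geometry-dependent constants $x_k$ and $y_k$ are fixed, giving $\pi_k = x_k^2/(y_k+P_J^{(k)})$.

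Third, I would show that the inner problem is a convex program whose optimal value is precisely~\eqref{eq:pcost}. With the substitution above, the inner objective $\sum_{\ell_k\in\Pi}\bigl(P_S^{(k)}+P_J^{(k)}\bigr)$ is linear in the $P_J^{(k)}$, while the constraint function $\sum_{\ell_k\in\Pi} x_k^2/(y_k+P_J^{(k)})$ is convex on the admissible domain $P_J^{(k)}\ge 0$ (each summand has positive second derivative there); since raising any $P_J^{(k)}$ strictly increases the objective and strictly decreases the constraint value, the constraint is active at optimality, so the equality may be relaxed to a convex inequality and the stationarity conditions~\eqref{eq:L1}--\eqref{eq:L2} are both necessary and sufficient for a global minimum. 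Solving them produces the optimal split~\eqref{eq:pi}, the optimal jamming powers~\eqref{eq:pj}, and, on back-substitution, the closed form~\eqref{eq:pcost} for the minimal cost of $\Pi$. Composing the three reductions, SMER is exactly $\min_{\Pi\in\mc{L}}\mc{C}(\Pi)$ with $\mc{C}(\Pi)$ given by~\eqref{eq:pcost}.

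The main obstacle I anticipate is the bookkeeping around feasibility of the inner problem: one must verify that the Lagrangian stationary point actually lies in the admissible region --- in particular that~\eqref{eq:pj} yields $P_J^{(k)}\ge 0$ and the induced $\pi_k$ lie in $(0,1)$ --- and, where it does not, argue via the boundary of the convex feasible set that the path in question simply cannot meet the secrecy target $\pi$ with its jammer set $\mc{J}_k$, so that such paths are excluded from the feasible routes without affecting the stated equivalence. The remainder is a routine concatenation of the derivations already carried out in Sections~\ref{sec:linkcost} and~\ref{sec:pathcost}.
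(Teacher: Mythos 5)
Your proposal is correct and follows essentially the same route as the paper: discharge the goodput constraint via the fixed-rate power allocation, pass to the linearized equality constraint $\sum_{\ell_k\in\Pi}\pi_k=\pi$ using Lemma~1, and solve the inner per-path allocation by the Lagrangian computation that yields \eqref{eq:pi}, \eqref{eq:pj} and the closed form \eqref{eq:pcost}. The only difference is that you add rigor the paper leaves implicit --- verifying convexity of the inner problem so that stationarity is sufficient, and flagging the feasibility of $P_J^{(k)}\ge 0$ --- which is a welcome tightening rather than a different argument.
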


\subsection{Optimal Path Cost Structure}
Define $\mc{C}_1(\ell_k)$ and $\mc{C}_2(\ell_k)$ as follows:
\begin{equation}
\begin{split}
    \mc{C}_1(\ell_k) &= (\gamma_S k_{\rho}) \cdot d_{S_k, D_k}^{\alpha} - y_k,\\
    \mc{C}_2(\ell_k) &= \frac{1}{\sqrt{\pi}} \cdot \sum_{\ell_k \in \Pi} x_k \eqend
\end{split}
\end{equation}
Then the optimal path cost~\eqref{eq:pcost} can be expressed as
\begin{equation}
\label{eq:pcost2}
    \mc{C}(\Pi)
        = \sum_{\ell_k \in \Pi} \mc{C}_1(\ell_k)
          + \Big( \sum_{\ell_k \in \Pi} \mc{C}_2(\ell_k) \Big)^2
            \eqend
\end{equation}

It is important to note that, while the $\mc{C}_1(\ell_k)$'s may
assume negative values, the path cost structure in (\ref{eq:pcost2})
is monotonous in the number of links, \ie\ if a path $\hat{\Pi}$ is a
subset of a path $\Pi$, then $\mc{C} (\hat{\Pi}) < \mc{C} (\Pi)$.
This is because $\pi <1$, and it can be shown that
$\big( \sum_{\ell_k \in \Pi} x_k \big)^2 > \sum_{\ell_k \in \Pi} y_k$.
Consequently, (\ref{eq:pcost2}) is minimized by a {\em simple} path.

\section{Secure Minimum Energy Routing}
\label{sec:algorithms}

In this section, we investigate the secure minimum energy routing
problem, where the cost of a path is given by~\eqref{eq:pcost}. We
begin by establishing that it is NP-hard. Then, by exploiting the
structure of the optimal solution, we employ dynamic programming to
obtain a pseudo-polynomial time algorithm that provides an exact
solution. This means that the problem is weakly NP-hard~\cite{garey},
thus fully polynomial time approximate schemes are possible.
Accordingly, we conclude the section by presenting a fully polynomial
time $\epsilon$-approximation algorithm for the problem,
which takes an approximation parameter $\epsilon > 0$ and after
running for time polynomial in the size of the network and in
$1/\epsilon$, it returns a path whose cost is at most $(1+\epsilon)$
times more than the optimal value.

\subsection{Computational Complexity}
We first show that our routing problem is NP-hard
via a reduction from the partition problem.

\begin{theorem}
Problem SMER is NP-hard.
\end{theorem}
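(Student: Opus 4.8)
The plan is to reduce from the (weakly NP-hard) \textsc{Partition} problem: given positive integers $a_1, \ldots, a_n$ with $\sum_i a_i = 2A$, decide whether there is a subset $\mc{S} \subseteq \{1, \ldots, n\}$ with $\sum_{i \in \mc{S}} a_i = A$. The idea is to build a layered network in which every source--destination path must pick, at stage $i$, one of two parallel ``gadget'' links corresponding to including item $i$ in $\mc{S}$ or not, so that routing decisions are in bijection with subsets of $\{1, \ldots, n\}$. Each path then has exactly $n$ links, and the path cost~\eqref{eq:pcost2} splits into the additive part $\sum_k \mc{C}_1(\ell_k)$ and the squared part $\big(\sum_k \mc{C}_2(\ell_k)\big)^2$. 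I would arrange the geometry (distances $d_{S_k,D_k}$, eavesdropper locations and probabilities, jammer placements) so that the two link choices at stage $i$ contribute values tied to $a_i$ --- for instance making $\mc{C}_2$ of one choice encode $a_i$ and the other encode $0$, while keeping $\sum_k \mc{C}_1$ constant across all paths. Then $\sum_k \mc{C}_2(\ell_k) = c \cdot \sum_{i \in \mc{S}} a_i$ for a fixed constant $c$, and the path cost is a strictly convex function of $\sum_{i \in \mc{S}} a_i$.

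Having set this up, I would choose a threshold so that SMER (as a decision problem: ``is there a path of cost $\le T$?'') has a yes-answer if and only if $\sum_{i\in\mc{S}} a_i$ can be made exactly $A$. Since $\big(\sum_k \mc{C}_2\big)^2$ is convex in $\sum_{i\in\mc{S}} a_i$ and symmetric about the midpoint $A$ of the range $[0, 2A]$, its minimum over integer-valued subset sums is attained precisely when the subset sum equals $A$; setting $T$ equal to (constant additive part) $+ (cA)^2$ makes a cost-$\le T$ path exist iff a perfect partition exists. I must double-check feasibility of each path with respect to the secrecy constraint $\sum_k \pi_k = \pi$: by the analysis of Section~\ref{sec:pathcost}, \emph{every} simple path admits a valid jamming-power allocation achieving end-to-end eavesdropping probability exactly $\pi$ (the per-link split~\eqref{eq:pi} and jamming powers~\eqref{eq:pj} are always well-defined given $x_k, y_k$), so the secrecy constraint imposes no extra combinatorial restriction --- it is automatically satisfiable on the gadget graph, and the only thing being optimized over is the topological choice of links. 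This is the crucial point that makes the reduction clean: the hardness comes entirely from the $\big(\sum_k x_k\big)^2$ term in~\eqref{eq:pcost}.

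The main obstacle is the gadget construction: I need to exhibit concrete node coordinates, eavesdropper sets $\mc{E}_k$ with probabilities $p_k(E_i)$, and jammer sets $\mc{J}_k$ such that the derived quantities $x_k = \frac{1}{\sqrt{|\mc{E}_k|}}\sum_{E_i \in \mc{E}_k}\sqrt{p_k(E_i)/\phi_k(i)}$ and $y_k = \frac{1}{|\mc{E}_k|}\sum_{E_i\in\mc{E}_k} 1/\phi_k(i)$ and the source-power term $\gamma_S k_\rho d_{S_k,D_k}^\alpha$ take exactly the values needed to encode $a_i$ --- and I must verify the encoding uses only rational/polynomially-bounded data so the reduction is polynomial-time. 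The simplest route is to work with a single eavesdropper per link ($|\mc{E}_k| = 1$, $p_k = 1$), so that $x_k = \sqrt{1/\phi_k}$ and $y_k = 1/\phi_k = x_k^2$, reducing the degrees of freedom to one scalar $\phi_k$ per link, which I then tune via the ratio $d_{S_k,E_k}/d_{S_k,D_k}$ and the jammer distance $d_{J,E_k}$; simultaneously I pad the $\mc{C}_1$ terms (choosing $d_{S_k,D_k}$) so that $\sum_k \mc{C}_1(\ell_k)$ is path-independent. Once the numbers are pinned down, checking that $\mc{C}(\Pi) \le T$ $\Leftrightarrow$ perfect partition is a short convexity argument, and NP-hardness of SMER follows from NP-completeness of \textsc{Partition}.
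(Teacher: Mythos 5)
Your choice of source problem (\textsc{Partition} is fine to name in words: the Partition problem) and of the structural feature to exploit (the squared term in \eqref{eq:pcost2}) match the paper, as does your observation that the secrecy constraint is satisfiable on every path, so the hardness is purely combinatorial. However, the core of your reduction as stated does not work. If $\sum_k \mc{C}_1(\ell_k)$ is the same for all paths and $\sum_k \mc{C}_2(\ell_k) = c\sum_{i\in\mc{S}} a_i$, then the path cost is $(\text{constant}) + c^2 s^2$ with $s=\sum_{i\in\mc{S}}a_i \in [0,2A]$. The function $s\mapsto c^2 s^2$ is convex but it is \emph{not} symmetric about $s=A$; it is increasing on $[0,2A]$ and is minimized at $s=0$, i.e., by the path that always takes the $\mc{C}_2=0$ link. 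Consequently a path of cost at most $(\text{constant})+(cA)^2$ exists for \emph{every} instance (take the empty subset), and the reduction decides nothing. What is missing is a linear counterweight: the link that removes $a_i$ from the $\mc{C}_2$-sum must be charged proportionally to $a_i$ in the additive metric. The paper assigns $\mc{C}_1=2A\, a_i,\ \mc{C}_2=0$ to one parallel link and $\mc{C}_1=0,\ \mc{C}_2=a_i$ to the other, so that the total cost equals $2A(2A-s)+s^2=3A^2+(s-A)^2$; this is exactly the parabola centered at $s=A$ that your symmetry argument presupposes, and its minimum $3A^2$ is attained if and only if a perfect partition exists.

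On your secondary concern --- realizing prescribed $(\mc{C}_1,\mc{C}_2)$ pairs by concrete node, eavesdropper, and jammer placements --- the paper does not address this either: it reduces Partition to the abstract optimization of \eqref{eq:pcost2} over paths in a graph with given $\mc{C}_1,\mc{C}_2$ labels. Your instinct that a fully rigorous hardness proof for the geometric problem would require exhibiting such placements (and your observation that with $|\mc{E}_k|=1$ one has $y_k=x_k^2$, coupling the two metrics and forcing compensation through $d_{S_k,D_k}$) is a legitimate point, but it is additional work beyond what the paper does and you have not carried it out. If you repair the reduction as above, you may, like the paper, state the result at the level of the cost structure \eqref{eq:pcost2}.
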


\begin{proof}
We describe a polynomial time reduction of the Partition
problem~\cite{garey} to SMER. Given a set of integers $\mc{S}=
\set{k_1, k_2, \ldots, k_n}$, with $\sum_{i=1}^{n} k_i = 2\cdot K$,
the Partition problem is to decide whether there is a subset
$\mc{S'}$ of $\mc{S}$ such that $\sum_{i\in \mc{S'}} k_i = K$.

Given an instance $\mc{S}= \set{k_1, k_2, \ldots, k_n}$ of the
Partition problem, with $\sum_{i=1}^{n} k_i  = 2\cdot K$, we
construct the following network. The set of nodes is identical to
$\mc{S}$. For $i=1$ to $n-1$, we interconnect node $k_i$ to node
$k_{i+1}$ with two links, as follows: an ``upper'' link
$\ell_i^{(u)}$, to which we assign $\mc{C}_1(\ell_i^{(u)})= 2\cdot K
\cdot k_i$ and $\mc{C}_2(\ell_i^{(u)})= 0$, and a ``lower'' link
$\ell_i^{(w)}$, to which we assign $\mc{C}_1(\ell_i^{(w)})= 0$ and
$\mc{C}_2(\ell_i^{(w)})= k_i$.

\begin{lemma}
The answer to the Partition problem is affirmative {\em iff} the
solution to SMER in the constructed network, \ie\ the minimum value
of $(\ref{eq:pcost2})$ of a path between nodes $k_1$ and $k_n$,
equals $3 \cdot K^2$.
\end{lemma}

\begin{proof}
A path $\Pi$ between nodes $k_1$ and $k_n$ consists of a (possibly
empty) set of ``upper'' links $\mc{U}$ and a (possibly  empty) set
of ``lower'' links $\mc{W}$. Let $\mc{S}_u$ and $\mc{S}_w$ be,
correspondingly, the sets of indices of the links in $\mc{U}$ and in
$\mc{W}$, \ie\ $i\in \mc{S}_u$ {\em iff} $\ell_i^{(u)}\in \mc{U}$
and $i\in S_w$ {\em iff} $\ell_i^{(w)}\in \mc{W}$. Clearly,
$\mc{S}_u \cup \mc{S}_w = \mc{S}$. The cost of the path, per
$(\ref{eq:pcost2})$, is given by:
\begin{equation}
\label{eq:nph1}
\begin{split}
    \mc{C}(\Pi)
        &= \sum_{\ell_i^{(u)}\in \mc{U}} \mc{C}_1(\ell_i^{(u)}) + \sum_{\ell_i^{(w)}\in \mc{W}} \mc{C}_1(\ell_i^{(w)})\\
          &\quad+ \Big( \sum_{\ell_i^{(u)}\in \mc{U}} \mc{C}_2(\ell_i^{(u)}) + \sum_{\ell_i^{(w)}\in \mc{W}} \mc{C}_2(\ell_i^{(w)}) \Big)^2, \\
    &= \sum_{i\in \mc{S}_u} (2\cdot K \cdot k_i) + \sum_{i\in \mc{S}_w} 0
          + \Big( \sum_{i\in \mc{S}_u} 0 + \sum_{i\in \mc{S}_w} k_i \Big)^2
          \eqend
\end{split}
\end{equation}
Consider first the case $\sum_{i\in \mc{S}_u} k_i \geq \sum_{i\in
\mc{S}_w} k_i$. Since $\sum_{i\in \mc{S}_u} k_i + \sum_{i\in
\mc{S}_w} k_i = 2\cdot K$, denote: $\sum_{i\in \mc{S}_u} k_i =  K +
\delta$, $\sum_{i\in \mc{S}_w} k_i =  K - \delta$, for some $\delta
\geq 0$. Then, from (\ref{eq:nph1}), we have:
\begin{equation*}
\label{eq:nph2}
\begin{split}
    \mc{C}(\Pi)
        = 2\cdot K \cdot (K+\delta) + \Big( K-\delta \Big)^2 = 3 \cdot K^2 + \delta^2
    \eqend
\end{split}
\end{equation*}
Consider now the case $\sum_{i\in \mc{S}_u} k_i < \sum_{i\in
\mc{S}_w} k_i$. It follows similarly that
\begin{equation*}
\label{eq:nph3}
    \mc{C}(\Pi) = 3 \cdot K^2 + \delta^2 \eqend
\end{equation*}
We conclude that the length of a path between nodes $k_1$ and $k_n$
is at least $3 \cdot K^2$, and, furthermore, that value is attained
{\em iff} the set $\mc{S}$ can be partitioned into two subsets
$\mc{S}_u$ and $\mc{S}_w$, such that $\sum_{i\in \mc{S}_u} k_i =
\sum_{i\in \mc{S}_w} k_i$, \ie\ {\em iff} there is a subset $\mc
{S'}= \mc{S}_u$ of $\mc{S}$ such that  $\sum_{i\in \mc{S'}} k_i =
K$, and the lemma follows.
\end{proof}
Since the Partition problem is NP-complete~\cite{garey}, the
\mbox{theorem} follows.
\end{proof}

\subsection{Pseudo-Polynomial Time Exact Algorithm}
First, scale the values of the $\mc{C}_2(\ell)$'s for any link $\ell$
in the network so that they are all integers.\fnote{\mbox{The value
of ``$1$'' is determined by the precision at which we compute
$\mc{C}_2(\ell)$'s}.} Let $B$ denote an upper-bound on the sum of the
$\mc{C}_2(\ell)$'s on any simple path. A trivial bound is given by
$B=(N-1)\cdot\mc{C}_2^{max}$, where $N$ is the number of nodes in the
network and $\mc{C}_2^{max}$ is the maximum value of $\mc{C}_2(\ell)$
among all network links.
In a network with $N$
nodes, $\mc{C}_2^{max}$ can be computed in $O(N^2)$ time via a
brute-force search.

Our algorithm, termed DP-SMER, is listed below. \mbox{DP-SMER} iterates
over all values of $\mc{C}_2(\ell)$, \ie\
$\mc{C}_2(\ell)=1,2,\ldots,B$, and for each value of
$\mc{C}_2(\ell)$, it minimizes $\sum \mc{C}_1(\ell)$. Upon return,
the algorithm returns the cost of the optimal path from source $s$
to destination $d$ along with the structure $\Pi$ that contains the
network nodes that form the path.

\begin{algorithm}
\caption{DP-SMER (source $s$, dest. $d$, network $\mc{N}$).}
\label{alg:dp-smer}
\renewcommand{\algorithmiccomment}[1]{/*~#1~*/}
\begin{algorithmic}
    \STATE \COMMENT{path cost from $s$ to itself is always $0$}
        \FOR {$b = 1 \to B$}
                \STATE $C_s(b) = 0$
            \ENDFOR
    \STATE \COMMENT{initial path cost from $s$ to any other node is infinite}
        \FORALL {$n_i \in \mc{N}$, $n_i \neq s$}
            \FOR {$b = 1 \to B$}
                        \STATE $C_i(b) = \infty$
            \ENDFOR
        \ENDFOR

    \FOR {$b = 1 \to B$}
        \STATE \COMMENT{all node pairs can form a link and be neighbors}
        \FORALL {$n_i \in \mc{N}$}
            \FORALL {$n_j \in \mc{N}$}
                \STATE \COMMENT{update path cost via the neighboring nodes}
            \IF {$b+\mc{C}_2(\ell_{ij}) \leq B$}
                \STATE $ t = C_{i}(b) + \mc{C}_1(\ell_{ij})$
                \IF {$t < C_{j}(b+\mc{C}_2(\ell_{ij}))$}
                            \STATE $\Pi_j(b+\mc{C}_2(\ell_{ij})) = i$ \COMMENT {set $n_j$'s parent to $n_i$}
                            \STATE $C_{j}(b+\mc{C}_2(\ell_{ij})) = t$ \COMMENT {update path cost}
                \ENDIF
            \ENDIF
            \ENDFOR
            \ENDFOR
        \ENDFOR

    \STATE \COMMENT{include the ``$b$'' component, \ie\ $\mc{C}_2$, in the path costs}
    \FOR {$b = 1 \to B$}
            \STATE $\hat{C}_d(b) = C_d(b) + b^2$
    \ENDFOR

    \STATE \COMMENT{choose the best value for reaching the destination}
    \STATE $\displaystyle b^* = \arg \min_{b} \hat{C}_d(b)$
        \STATE {\bf return} $[\hat{C}_d(b^*), \Pi(b^*)]$
\end{algorithmic}
\end{algorithm}

\begin{theorem}
\label{theo:DP} DP-SMER runs in time $O(N^2 \cdot B)$, where $N$ is
the number of nodes in the network. Upon completion, the algorithm
returns an optimal solution to Problem SMER.
\end{theorem}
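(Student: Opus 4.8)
The plan is to establish the two assertions of the theorem separately: the $O(N^2\cdot B)$ running time, which is a direct accounting of the nested loops, and optimality, which reduces to a dynamic-programming invariant for the table $C_v(\cdot)$, combined with the separable structure of the path cost in \eqref{eq:pcost2} and the fact, established in Section~\ref{sec:pathcost}, that an optimal route is simple.

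For the running time, I would simply add up the work. Computing $\mc{C}_2^{max}$, and hence the bound $B$, costs $O(N^2)$ by the brute-force search already noted. The two initialization loops cost $O(B)$ and $O(N\cdot B)$. The core of \mbox{DP-SMER} is the triple loop over $b\in\{1,\dots,B\}$ and over all ordered pairs $(n_i,n_j)$ of the $N$ nodes, doing $O(1)$ work per iteration, hence $O(N^2\cdot B)$; the two concluding passes over $b$ and the arg-min add only $O(B)$. Since the integer-scaled $\mc{C}_2(\ell)$ are positive, $B\ge N-1$, so the $O(N^2\cdot B)$ term dominates and gives the stated bound.

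For optimality, the key lemma I would prove is the loop invariant that, upon termination, $C_v(b)$ equals the minimum of $\sum_{\ell\in\Pi}\mc{C}_1(\ell)$ over all $s$--$v$ paths $\Pi$ with $\sum_{\ell\in\Pi}\mc{C}_2(\ell)\le b$ (equal to $+\infty$ if no such path exists), a quantity monotone non-increasing in $b$, with the source's base case encoded by the initialization $C_s(\cdot)=0$. I would prove this by induction on the number of hops of a witnessing path. The decisive structural fact that lets a single relaxation of every ordered pair $(n_i,n_j)$ at each value of $b$ suffice --- rather than the $\Theta(N)$ Bellman--Ford-style rounds one might expect --- is that the scaled $\mc{C}_2(\ell)$ are \emph{positive}: the partial sums of $\mc{C}_2$ then strictly increase along any path, so every write into a level-$b$ entry happens in a strictly earlier outer round $b'<b$ (because $b'+\mc{C}_2(\ell)\ge b'+1$), which guarantees that when $(n_i,n_j)$ is relaxed in round $b$ the entry $C_i(b)$ already holds its final value and the ``push'' updates compose correctly along a path. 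I expect this induction --- and in particular the bookkeeping that level-$b$ reads in round $b$ are never overwritten within round $b$ --- to be the main obstacle; it is exactly where positivity of $\mc{C}_2$ is used. (If some $\mc{C}_2(\ell)$ could vanish, as in the NP-hardness construction, a single pass per level would no longer be enough, and zero-weight links would require a separate shortest-path step within each level.)

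Finally, I would combine the invariant with \eqref{eq:pcost2}. A path whose $\mc{C}_2$-coordinates sum to $b$ has cost $\bigl(\sum_\ell\mc{C}_1(\ell)\bigr)+b^2$, so on one hand $\hat C_d(b)=C_d(b)+b^2\le\mc{C}(\Pi)$ whenever $b$ equals the $\mc{C}_2$-sum of $\Pi$, whence $\min_b\hat C_d(b)\le\mc{C}(\Pi)$ for every $s$--$d$ path $\Pi$; on the other hand every finite $\hat C_d(b)$ is the cost of an actual path (the one realizing $C_d(b)$, whose $\mc{C}_2$-sum is at most $b$) and is therefore at least the optimum. Hence $\min_b\hat C_d(b)$, the value returned by \mbox{DP-SMER}, equals the minimum of $\mc{C}(\cdot)$ over all $s$--$d$ paths, and reading off $\Pi(b^*)$ through the stored parent pointers recovers a route attaining it. Invoking the monotonicity of \eqref{eq:pcost2} in the number of links (Section~\ref{sec:pathcost}) --- every path's cost dominates that of the simple path left after excising its cycles --- this minimum is attained by a simple path, so \mbox{DP-SMER} returns an optimal solution to Problem~SMER as characterized in Proposition~1.
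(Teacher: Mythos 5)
Your proof is correct and follows essentially the same route as the paper's: an induction on $b$ establishing a dynamic-programming invariant for $C_v(b)$, combined with the cost decomposition \eqref{eq:pcost2} and the end-of-Section~\ref{sec:pathcost} monotonicity argument guaranteeing that the optimum is attained by a simple path. The only differences are matters of detail rather than approach --- you phrase the invariant with $\sum\mc{C}_2(\ell)\le b$ where the paper uses ``precisely $b$'' (both suffice for the final comparison over $b$), and you make explicit, via positivity of the integer-scaled $\mc{C}_2(\ell)$'s, why a single relaxation pass per level is enough, a point the paper leaves implicit in its induction.
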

\begin{proof}
The first claim follows by noting that the computational complexity
is dominated by an iteration on all values $1,2,\ldots,B$, and for
each such iteration, iterating on all pairs of nodes.

We turn to consider the second claim. First, it can be established,
by induction on the values of $b$, that, upon completion of the
$b$-th iteration of the main loop of the algorithm, for all nodes
$n_i$, $C_i(b)$ is the length of a shortest path with respect to the
metric of the $\mc{C}_1(\ell_{ij})$ values, among all paths between
the source $s$ and node $n_i$, whose length with respect to the
metric of the $\mc{C}_2(\ell_{ij})$ values is precisely $b$.\fnote{We
note that this shortest path may be non-simple, \ie\, include loops,
due to the potentially negative values of $\mc{C}_1(\ell_{ij})$'s;
nonetheless, it is a finite path, and, furthermore, the optimal path
returned by the last step of DP-SMER is guaranteed to be simple, due
to the monotonicity property explained at the end of Section
\ref{sec:pathcost}.} Furthermore, it is easy to verify that the
values of $\hat{C}_N (b)$, computed at the next step of the
algorithm, stand for the lengths of the above shortest paths with
respect to the metric considered by Problem SMER.

Now, let $\Pi^*$ be an optimal solution (\ie\ a path) to Problem
SMER, and denote by $C^*$ its length with respect to the metric
considered by SMER. Furthermore, denote $b^* =
\sum_{\ell_{ij}\in\Pi^*} \mc{C}_2(\ell_{ij})$. It is easy to verify
that $\Pi^*$ is a shortest path with respect to the metric of the
$\mc{C}_1(\ell_{ij})$ values, among all paths between the source $s$
and the destination $d$, whose length with respect to the metric of
the $\mc{C}_2(\ell_{ij})$ values is precisely $b^*$. Therefore, upon
completion of the above steps of the algorithm, we will have
$\hat{C}_N (b^*) = C^*$; moreover, since $\Pi^*$ is an optimal
solution to SMER, it must hold that $\hat{C}_N (b^*) \leq \hat{C}_N
(b)$ for all values of $b$. The theorem follows.
\end{proof}

\subsection{Fully Polynomial Time $\epsilon$-Approximation}
As in the previous section, we scale the values of the
$\mc{C}_2(\ell)$'s for any link $\ell$ in the network so that they
are all integers and denote by $B$ an upper-bound on the sum of the
$\mc{C}_2(\ell)$'s on any simple path.

The above pseudo-polynomial solution indicates that SMER is only
weakly NP-hard (see~\cite{garey}), which enables us to apply
efficient, $\epsilon$-optimal approximation schemes of polynomial
time complexity, similar to the case of the widely investigated
Restricted Shortest Path problem (RSP, see,
\eg~\cite{Lorenz99asimple} and references therein). The RSP problem
considers a network where each link has two metrics, say ``cost''
and ``delay'', and some ``bound'' on the end-to-end delay. Then, for
a given source-destination pair, the problem is to find a path of
minimum cost among those whose delay do not exceed the delay bound.
This weakly NP-hard problem admits efficient $\epsilon$-optimal
approximation schemes of polynomial
complexity,~\eg~\cite{Lorenz99asimple}.

We turn to specify our approximation scheme for Problem SMER by a
simple employment of any solution to the RSP problem.\fnote{Other
solutions, of reduced computational complexity, can be established,
yet their structure is somewhat more complex.} First, a technical
difficulty arises in applying RSP approximation schemes to Problem
SMER. Recall that while link costs as given by~\eqref{eq:lcost} are
non-negative, $\mc{C}_1(\ell)$ can be negative for some links
$\ell$. In RSP, specifically in the approximation scheme
of~\cite{Lorenz99asimple}, it is assumed that link costs are
non-negative. Nevertheless, we show that the original network with
possibly negative link weights can be safely transformed (\ie\
without affecting the identity of the solution) to an expanded
network with non-negative link weights,  by employing the following
pre-processing step:
\begin{algorithm}
\caption{Expand\_Network (source $s$, network $\mc{N}$).}
\begin{enumerate}
\item Add the source node $s$ to the expanded network.
\item For each node $u$ $(u \neq s)$ in the original network,
add $N-1$ replicas denoted by $u(1), u(2), \ldots, u(N-1)$ to the
expanded network.

\item For each link $\ell_{su}$ from node $s$ to node $u$ in the original network,
add a link from node $s$ to node $u(1)$ in the expanded network with
the same metrics as for the original link.

\item For each link $\ell_{uv}$ in the original network, where \mbox{$u \neq s, u \neq d,
v \neq s$}, and for each $h=1,\ldots, N-2$, add a link between node
$u(h)$ and node $v(h+1)$ in the expanded network with the same
metrics as for the original link.

\item For each link $\ell$ in the expanded network, add some (identical to all links)
bias $\delta \ge 0$
to each link cost $\mc{C}_1(\ell)$ so that the new link costs would
be non-negative.
\end{enumerate}
\end{algorithm}

The following lemmas establish the relation between the shortest
paths in the original network and the shortest paths in the expanded
network.
\begin{lemma}
\label{l:exp}
A path that is shortest w.r.t. the biased metric $(\mc{C}_1(\ell) +
\delta)$ among those that obey a bound on the $\sum \mc{C}_2(\ell)$
and have precisely $h$ hops, is also shortest w.r.t. the unbiased
metric $\mc{C}_1(\ell)$ among those that obey the same bound on
$\sum \mc{C}_2(\ell)$ and have precisely $h$ hops.
\end{lemma}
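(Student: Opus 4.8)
The plan is to exploit the single fact that the bias $\delta$ added in the final step of \textsf{Expand\_Network} is \emph{identical on every link}, so that on any family of paths sharing a common hop count, the biased metric and the unbiased metric differ by a constant that is independent of which path in the family one picks. First I would fix the hop count $h$ and let $\mathcal{F}_h$ denote the family of paths (in the expanded network, equivalently in the original network) that use exactly $h$ links and whose total $\mathcal{C}_2$-weight obeys the prescribed bound. Because the pre-processing step modifies only the $\mathcal{C}_1$-weights and leaves every $\mathcal{C}_2(\ell)$ untouched, the family $\mathcal{F}_h$ is literally the same set of paths whether feasibility is judged before or after biasing; this is the one (mild) point that should be checked explicitly rather than glossed over.

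Next, for any path $P\in\mathcal{F}_h$ I would write, using $|P|=h$,
\[
  \sum_{\ell\in P}\bigl(\mathcal{C}_1(\ell)+\delta\bigr)
  \;=\; \Bigl(\sum_{\ell\in P}\mathcal{C}_1(\ell)\Bigr) + h\,\delta .
\]
Since the additive term $h\delta$ does not depend on the particular choice of $P$ within $\mathcal{F}_h$, the two objective functions $P\mapsto\sum_{\ell\in P}(\mathcal{C}_1(\ell)+\delta)$ and $P\mapsto\sum_{\ell\in P}\mathcal{C}_1(\ell)$ induce the \emph{same} ordering on $\mathcal{F}_h$. Hence any minimizer of the biased cost over $\mathcal{F}_h$ is also a minimizer of the unbiased cost over $\mathcal{F}_h$ (and conversely), which is exactly the assertion of the lemma.

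Finally I would record why this is the right statement to isolate: the layered construction of \textsf{Expand\_Network}, with its replicas $u(1),\dots,u(N-1)$, forces every path from $s$ to a replica $u(h)$ in the expanded graph to consist of exactly $h$ hops. Thus running an RSP-type routine on the expanded graph and reading off the answer at the appropriate replica amounts to solving, \emph{separately for each $h$}, a $\sum\mathcal{C}_2$-bounded shortest-path problem with the now non-negative weights $\mathcal{C}_1+\delta$, and by the argument above each such solve returns a path that is genuinely shortest w.r.t.\ the original weights $\mathcal{C}_1$ among all $h$-hop feasible paths. There is no serious obstacle here; the only subtlety to watch is that the constant-offset cancellation relies on comparing paths of equal length, so the bias trick cannot be applied directly on the original, unlayered network — which is precisely the reason the expansion step is introduced.
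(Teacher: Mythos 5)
Your argument is correct and rests on exactly the same observation as the paper's proof: for paths with precisely $h$ hops the biased and unbiased $\mc{C}_1$-costs differ by the constant $h\delta$, so the two metrics induce the same ordering on that family. The paper merely packages this as a short contradiction argument, whereas you state the order-preservation directly; the content is identical.
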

\begin{proof}
Suppose that this is not true. That is, there are paths $\Pi$ and
$\Pi'$, both obeying the bound on $\sum \mc{C}_2(\ell)$ and with $h$
hops, in such a way that $\Pi'$ is a shortest path with the bias yet
$\Pi$ is shorter without the bias. Therefore, $\sum_{\ell \in \Pi}
\mc{C}_1(\ell) < \sum_{\ell \in \Pi'} \mc{C}_1(\ell)$, yet
$\sum_{\ell \in \Pi} (\mc{C}_1(\ell) + \delta) \ge \sum_{\ell \in
\Pi'} (\mc{C}_1(\ell) + \delta)$. However, the second inequality can
be rewritten as:\linebreak $\sum_{\ell \in \Pi} \mc{C}_1(\ell) + h \cdot
\delta \ge \sum_{\ell \in \Pi'} \mc{C}_1(\ell) + h \cdot \delta$,
which contradicts the first inequality.
\end{proof}

\begin{lemma}
\label{l:hops} A shortest path from source $s$ to node $d(h)$ in the
expanded network has precisely $h$ hops.
\end{lemma}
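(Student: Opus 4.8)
The plan is to exploit the \emph{layered} structure of the graph produced by \textbf{Expand\_Network}. Assign to the source $s$ the ``layer index'' $0$, and to every replica $u(i)$ the layer index $i$, for $i = 1, \ldots, N-1$. The entire claim will follow once we observe that every edge of the expanded network joins a vertex in some layer $i$ to a vertex in layer $i+1$.

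First I would verify this edge property by simply reading it off the construction. By Step~3, the only edges leaving $s$ go to vertices of the form $u(1)$, i.e.\ from layer $0$ to layer $1$. By Step~4, every other edge goes from a vertex $u(h)$ to a vertex $v(h+1)$ for some $h \in \{1, \ldots, N-2\}$, i.e.\ from layer $h$ to layer $h+1$; moreover no edges are added out of the destination replicas $d(h)$, and since $s$ has no replicas it is never the head of an edge. Hence the expanded network contains no edge that stays within a layer, skips a layer, or decreases the layer index.

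Given this, any walk that starts at $s$ (layer $0$) and ends at $d(h)$ (layer $h$) must raise the layer index by exactly one at each step, so it consists of precisely $h$ edges; in particular this holds for a shortest $s$--$d(h)$ path (which we may assume exists, the statement being vacuous otherwise), giving exactly $h$ hops. There is essentially no obstacle: the only points that need a moment's care are that the exclusion $u \neq d$ in Step~4 means no ``backward'' edge is created out of a $d$-replica, and that $s$ (having no replicas) never appears as the target of an edge — both immediate from the wording of the construction.
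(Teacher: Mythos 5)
Your proof is correct and is essentially the paper's own argument: the paper likewise observes that the $i$-th hop of any $s$--$d(h)$ path must go from a layer-$(i-1)$ replica to a layer-$i$ replica, so the path has exactly $h$ hops. You have merely spelled out the layered-graph property in more detail than the paper's one-line justification.
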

\begin{proof}
The proof follows from the fact that the $i$-th hop on the shortest path
from $s$ to $d(h)$ has to go from some node $v(i-1)$ to some node
$u(i)$ (see the network expansion procedure).
\end{proof}

Thus, to compute an $\epsilon$-optimal solution to Problem SMER, for
every bound on $\sum \mc{C}_2(\ell)$, we find the shortest path with
$h=1,\ldots,N-1$ hops in the expanded network by repeatedly
employing an approximation solution to the RSP problem.
For a given
approximation value $\epsilon>0$, let \mbox{$\eta=\epsilon/3$}.
Furthermore, let $L$ be the smallest integer for which $\lceil
(1+\eta)^L \rceil \geq B$. Our algorithm, called $\epsilon$-SMER, is
listed below. In this algorithm, $\epsilon$-RSP refers to an
$\epsilon$-optimal approximation solution for the RSP problem.

\begin{algorithm}
\caption{$\epsilon$-SMER (error $\epsilon$, source $s$, dest. $d$,
net. $\mc{N}$).}
\renewcommand{\algorithmiccomment}[1]{/* #1 */}
\begin{algorithmic}
    \STATE $\mc{N}_x$ = Expand\_Network($s$, $\mc{N}$)

    \FORALL {$\ell \in \mc{N}_x$}
        \STATE $cost(\ell) = \mc{C}_1(\ell)$
        \STATE $delay(\ell) = \mc{C}_2(\ell)$
    \ENDFOR

    \FOR {$l = 1 \to L$}
        \STATE $delay\_bound = \lceil (1+\eta)^l \rceil$
        \STATE \COMMENT{compute the approximate $h$-hop path}
        \FOR {$h = 1 \to N-1$}
            \STATE $[C(l, h), \Pi(l, h)] =$ $\epsilon$-RSP$(\epsilon, s, d(h), \mc{N}_x)$
         \STATE \COMMENT{compute the actual cost as per SMER metric}
        \STATE $\hat{C}(l,h) = (C(l,h) - h \cdot \delta) + \lceil (1+\eta)^l \rceil^2$
        \ENDFOR
    \ENDFOR

    \STATE \COMMENT{choose the best $l$ and $h$ for reaching the destination}
    \STATE $\displaystyle (l^*,h^*) = \arg \min_{l,h} \hat{C}(l,h)$
        \STATE {\bf return} $[\hat{C}(l^*,h^*), \Pi(l^*,h^*)]$
\end{algorithmic}
\end{algorithm}

In the $\epsilon$-SMER algorithm, for each considered delay bound
$\lceil (1+\eta)^l \rceil$, $N-1$ instances of the approximation
solution to the RSP problem, for the same bound, are run on the
expanded network: in each instance $h$, we consider $s$ to be the
source and $d(h)$ to be the destination. Using Lemma~\ref{l:hops},
it is straightforward to verify that, in each instance $h$, the RSP
approximation obtains a solution that satisfies the required delay
bound with the restriction that the path has {\em precisely} $h$
hops (in both the expanded and the original network).

Therefore, per considered bound on the $\mc{C}_2(\ell)$ metric and
per possible number of hops up to $N-1$, we get an $\epsilon$-optimal
path with respect to the original metric $\mc{C}_1(\ell)$ (of
precisely that many hops). It follows from Lemmas~\ref{l:exp} and
~\ref{l:hops}, that, by comparing all solutions (for all considered
bounds on the $\mc{C}_2(\ell)$ metric and number of hops $h$), we
will find a shortest $\epsilon$-optimal path that corresponds to an
$\epsilon$-optimal solution to SMER. This is established next
through the following lemmas and theorem.

\begin{lemma}
\label{l:eps1} Let $\Pi^*$ be an optimal solution (path) to SMER.
Denote by $\mc{C}(\Pi^*)$ and $\mc{C}(\hat{\Pi})$, the costs, per
the SMER metric, of the optimal solution and of the solution
obtained by $\epsilon$-SMER, correspondingly. Then:
\begin{equation}
\label{eq:eps1}
    \mc{C}(\hat{\Pi}) \leq (1+\epsilon) \cdot \mc{C}(\Pi^*)
    \eqend
\end{equation}
\end{lemma}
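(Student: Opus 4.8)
The plan is to exhibit a single iteration $(l,h)$ of $\epsilon$-SMER that already "tracks" the optimal path $\Pi^*$ to within the allowed error, and to observe that the value the algorithm reports is never smaller than the true SMER cost of the path it outputs. First I would fix $\Pi^*$ and set $b^{*}=\sum_{\ell\in\Pi^{*}}\mc{C}_2(\ell)$ and $h^{*}=$ number of hops of $\Pi^{*}$. Since the optimal path is simple (the monotonicity property at the end of Section~\ref{sec:pathcost}), $h^{*}\le N-1$, and $b^{*}\le B$ by the definition of $B$. Hence the geometric grid $\lceil(1+\eta)^{l}\rceil$, $l=1,\dots,L$, contains an index $l$ with $b^{*}\le\lceil(1+\eta)^{l}\rceil$ and $\lceil(1+\eta)^{l}\rceil\le(1+\eta)\,b^{*}+1$ (take the smallest such $l$; very small $b^{*}$ is covered directly, as the grid starts near $1$). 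I would then concentrate on the iteration $(l,h^{*})$.

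In that iteration, the \mbox{Expand\_Network} construction together with Lemma~\ref{l:hops} ensures that $\epsilon$-RSP$(\epsilon,s,d(h^{*}),\mc{N}_x)$ optimizes over exactly the $s$--$d$ paths that use precisely $h^{*}$ hops and obey $\sum\mc{C}_2(\ell)\le\lceil(1+\eta)^{l}\rceil$. The image of $\Pi^{*}$ in $\mc{N}_x$ (replace the $i$-th intermediate node $v_i$ by $v_i(i)$) is such a path, because $b^{*}\le\lceil(1+\eta)^{l}\rceil$. Therefore the $(1+\epsilon)$-approximate RSP solution $\hat{\Pi}$ satisfies, in the biased metric, $\sum_{\ell\in\hat{\Pi}}(\mc{C}_1(\ell)+\delta)\le(1+\epsilon)\bigl(\sum_{\ell\in\Pi^{*}}\mc{C}_1(\ell)+h^{*}\delta\bigr)$; since every $h^{*}$-hop path carries the same bias $h^{*}\delta$, Lemma~\ref{l:exp} tells us this biased comparison is the pertinent one, and $\hat{\Pi}$ uses exactly $h^{*}$ hops in the original network. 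Unwinding the bias in the quantity the algorithm forms, $\hat{C}(l,h^{*})=\bigl(\sum_{\ell\in\hat{\Pi}}\mc{C}_1(\ell)\bigr)+\lceil(1+\eta)^{l}\rceil^{2}$, and using $\lceil(1+\eta)^{l}\rceil^{2}\le(1+\eta)^{2}(b^{*})^{2}+(\text{lower order})$, I would reach a bound of the form $\hat{C}(l,h^{*})\le(1+\epsilon)\sum_{\ell\in\Pi^{*}}\mc{C}_1(\ell)+(1+O(\eta))(b^{*})^{2}+(\text{bias and rounding slack})$.

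To finish, $\epsilon$-SMER returns $\hat{C}(l^{*},h^{*})=\min_{l,h}\hat{C}(l,h)\le\hat{C}(l,h^{*})$, and $\hat{C}(\cdot,\cdot)$ never underestimates the true SMER cost of the corresponding output path, because that path obeys its delay bound, so $\lceil(1+\eta)^{l}\rceil^{2}\ge\bigl(\sum\mc{C}_2\bigr)^{2}$. With $\eta=\epsilon/3$, the rounding of the $\bigl(\sum\mc{C}_2\bigr)^{2}$-term costs at most a $(1+\epsilon)$ factor, which together with the $(1+\epsilon)$ factor on the $\sum\mc{C}_1$-part and $\mc{C}(\Pi^{*})=\sum\mc{C}_1+(b^{*})^{2}$ yields~\eqref{eq:eps1}.

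The step I expect to be the main obstacle is controlling the slack produced by the additive bias $\delta$: the RSP guarantee is multiplicative, but only in the biased metric, while $\mc{C}_1(\ell)$ itself can be negative, so a naive accounting leaves an uncontrolled additive term $\epsilon\,h^{*}\delta$. Absorbing it into $(1+\epsilon)\mc{C}(\Pi^{*})$ is the crux, and it must exploit both the restriction to a fixed number of hops (so the bias is a clean constant shift, Lemma~\ref{l:exp}) and the structural inequality noted at the end of Section~\ref{sec:pathcost}, namely that $\bigl(\sum_{\ell\in\Pi}\mc{C}_2(\ell)\bigr)^{2}$ dominates $\sum_{\ell\in\Pi}\bigl(-\mc{C}_1(\ell)\bigr)$, so that $\mc{C}(\Pi^{*})$ is bounded below by a quantity comparable to the slack. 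I would also verify the small-$b^{*}$ corner case of the geometric grid and check the exact constants in the compounding of the three error sources (grid rounding of $b^{*}$, the RSP approximation, and the squaring), since these must jointly fit inside the single factor $(1+\epsilon)$.
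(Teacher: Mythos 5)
Your proposal follows essentially the same route as the paper's proof: fix the optimal path $\Pi^*$ with hop count $\bar h$ and $\mc{C}_2$-sum $b^*$, take the smallest grid index $\bar l$ with $b^* \le \lceil(1+\eta)^{\bar l}\rceil$ (so that, for integer-scaled $\mc{C}_2$, $\lceil(1+\eta)^{\bar l}\rceil \le (1+\eta)b^*$, which is \eqref{eq:eps3}), invoke the RSP guarantee at iteration $(\bar l,\bar h)$ against the feasible competitor $\Pi^*$, use the fact that $\hat C(l,h)$ upper-bounds the true SMER cost of the returned path because that path obeys its delay bound, and absorb the $(1+\eta)^2$ from squaring into a single $(1+\epsilon)$ via $\eta=\epsilon/3$. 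The one place where you go beyond the paper is the additive bias: you correctly observe that the RSP guarantee is multiplicative in the \emph{biased} metric, so it literally yields $\sum_{\ell\in\hat\Pi}\mc{C}_1(\ell)\le(1+\epsilon)\sum_{\ell\in\Pi^*}\mc{C}_1(\ell)+\epsilon\,\bar h\,\delta$, leaving a residual $\epsilon\,\bar h\,\delta$; the paper's proof instead writes the guarantee directly in the form $\sum_{\ell\in\Pi(\bar l,\bar h)}\mc{C}_1(\ell)+\bar h\delta\le(1+\epsilon)\sum_{\ell\in\Pi^*}\mc{C}_1(\ell)+\bar h\delta$, i.e., it silently applies the $(1+\epsilon)$ factor only to the unbiased part so that the bias cancels exactly, and offers no justification for dropping the residual. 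Your instinct that this term must be explicitly absorbed (by a further tightening of $\eta$ together with a lower bound on $\mc{C}(\Pi^*)$ relative to $\delta$, or by charging the approximation error to the unbiased cost) is sound; the paper does not supply that argument, so on this point your write-up is the more careful of the two, even though you leave the absorption step as an acknowledged obstacle rather than completing it.
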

\begin{proof}
Let $\bar{l}$ be the smallest integer such that
\begin{equation}
\label{eq:eps2}
    \Big( \sum_{\ell \in \Pi^*} \mc{C}_2(\ell)
    \Big)^2 \leq \Big( \lceil {(1+\eta)^{\bar{l}}} \rceil \Big)^2
    \eqend
\end{equation}
Note that this implies that:
\begin{equation}
\label{eq:eps3}
    (1+\eta)^2 \Big( \sum_{\ell \in \Pi^*}
    \mc{C}_2(\ell) \Big)^2 \geq \Big( \lceil (1+\eta)^{\bar{l}} \rceil
    \Big)^2
    \eqend
 \end{equation}
Let $\bar{h}$ be the number of hops of $\Pi^*$. By construction,
$\Pi (\bar{l},\bar{h})$ is an $\epsilon$-optimal approximation for
RSP, for ``costs'' $\mc{C}_1(\ell)$, ``delays'' $\mc{C}_2(\ell)$,
``delay bound'' $\lceil (1+\eta)^{\bar{l}} \rceil$ and precisely
$\bar{h}$ hops. Moreover, by (\ref{eq:eps2}), the path $\Pi^*$ obeys
this bound. Therefore:
\begin{equation}
    \sum_{\ell \in \Pi (\bar{l},\bar{h})} \mc{C}_1(\ell) + \bar{h} \cdot \delta \leq
    (1+\epsilon) \sum_{\ell \in \Pi^*} \mc{C}_1(\ell) + \bar{h} \cdot \delta,
 \end{equation}
or, equivalently,
\begin{equation}
\label{eq:eps4}
    \sum_{\ell \in \Pi (\bar{l},\bar{h})} \mc{C}_1(\ell)
    \leq (1+\epsilon) \sum_{\ell \in \Pi^*} \mc{C}_1(\ell)
    \eqend
 \end{equation}
Since $\Pi (\bar{l},\bar{h})$ obeys the ``delay bound'' $\lceil (1+\eta)^{\bar{l}}
\rceil$, we have:
\begin{equation}
\label{eq:eps5}
    \Big( \sum_{\ell \in \Pi (\bar{l},\bar{h})}
    \mc{C}_2(\ell) \Big)^2 \leq \Big( \lceil (1+\eta)^{\bar{l}} \rceil \Big)^2
    \eqend
 \end{equation}
Combining (\ref{eq:eps3}), (\ref{eq:eps4}) and (\ref{eq:eps5}), we
have:
\begin{equation}
\label{eq:eps6}
\begin{split}
    &\mc{C}(\hat{\Pi}) \leq \mc{C}( \Pi (\bar{l},\bar{h}))\\
    &\quad\leq (1+\epsilon) \sum_{\ell \in \Pi^*} \mc{C}_1(\ell) +
    (1+\eta)^2 \Big( \sum_{\ell \in \Pi^*} \mc{C}_2(\ell) \Big)^2,
\end{split}
\end{equation}
where the first transition is due to the way that $\hat{\Pi}$ is
chosen. Since $\eta={\epsilon\over 3}$, for small values of
$\epsilon$ (precisely, $\epsilon < 3$), (\ref{eq:eps6}) implies:
\begin{equation}
\label{eq:eps7} \mc{C}(\hat{\Pi}) \leq (1+\epsilon) \cdot
\mc{C}(\Pi^*),
\end{equation}
as required.
\end{proof}

\begin{lemma}
\label{l:eps2} The computational complexity of $\epsilon$-SMER is
\mbox{$O(A\cdot {1\over \epsilon} \cdot \log (B) \cdot N^3)$}, where
$O(A)$ is the computational complexity of the employed approximation
scheme for RSP.
\end{lemma}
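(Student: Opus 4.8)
The plan is to walk through $\epsilon$-SMER stage by stage, bound the work done in each stage, and add the bounds.

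First I would account for the preprocessing. Expand\_Network replaces each of the $N-1$ non-source nodes by $N-1$ replicas, so the expanded network $\mc{N}_x$ has $O(N^2)$ nodes; since each of the $O(N^2)$ links of the original network is copied $O(N)$ times (once per replica layer), $\mc{N}_x$ has $O(N^3)$ links, and building $\mc{N}_x$ and adding the uniform bias $\delta$ to every link cost takes $O(N^3)$ time. This term will turn out to be dominated.

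Second I would bound the loop structure. The outer loop runs $l=1,\dots,L$, where $L$ is the least integer with $\lceil(1+\eta)^L\rceil\ge B$; hence $L=O(\log_{1+\eta}B)=O\big(\tfrac{\log B}{\log(1+\eta)}\big)=O\big(\tfrac{\log B}{\eta}\big)=O\big(\tfrac{\log B}{\epsilon}\big)$, using $\log(1+\eta)=\Theta(\eta)$ for the relevant small $\eta=\epsilon/3$. For each $l$, the inner loop runs $h=1,\dots,N-1$, and each $(l,h)$ iteration performs exactly one call to $\epsilon$-RSP on $\mc{N}_x$ together with $O(1)$ arithmetic to form $\hat{C}(l,h)$; afterwards a single pass over the $O\big(\tfrac{N\log B}{\epsilon}\big)$ table entries selects the minimizer. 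Thus there are $L\cdot(N-1)=O\big(\tfrac{N\log B}{\epsilon}\big)$ invocations of $\epsilon$-RSP in total.

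Finally I would charge the RSP subroutine and combine. Writing $O(A)$ for the running time of one invocation of the chosen RSP approximation, each invocation here is executed on the expanded instance $\mc{N}_x$, whose $\Theta(N^2)$ vertices and $\Theta(N^3)$ edges are polynomially larger than those of the original network; folding that blow-up into the displayed $N^3$ factor, the total cost of the RSP phase is $O\big(A\cdot\tfrac{1}{\epsilon}\cdot\log B\cdot N^3\big)$, which dominates both the $O(N^3)$ preprocessing and the $O\big(\tfrac{N\log B}{\epsilon}\big)$ final scan, giving the claimed bound. The only step that is not purely mechanical is this last accounting: one must be careful that every RSP call runs on the \emph{expanded} network rather than on the original $N$-node network, and that the number of calls is exactly $L\cdot(N-1)$ with $L$ governed by the geometric grid $\lceil(1+\eta)^l\rceil$ placed on the admissible values of $\sum_{\ell}\mc{C}_2(\ell)$; once those two facts are pinned down, the estimate is a straightforward product.
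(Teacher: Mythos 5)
Your proposal is correct and reaches the same bound, but the accounting of the $N^3$ factor is organized differently from the paper's. The paper charges the $h$-th RSP call (the one with destination $d(h)$) only for the portion of the expanded network it can actually traverse --- roughly $hN$ nodes and $(h-1)M$ links --- so that the $N-1$ calls for a fixed delay bound aggregate to $O\big(A\cdot\sum_{h} h(h-1)\big)=O(A\cdot N^3)$. You instead charge every one of the $L\cdot(N-1)$ calls at the full expanded-network rate, arguing that one call on $\mc{N}_x$ costs $O(N^2\cdot A)$ and multiplying. Both decompositions give $O(A\cdot\frac{1}{\epsilon}\log B\cdot N^3)$; yours is coarser in constants but matches the pseudocode more literally (which does pass all of $\mc{N}_x$ to every invocation), while the paper's layered accounting reflects the natural optimization of restricting each call to the reachable prefix. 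The one step you should make explicit is the phrase ``folding that blow-up into the displayed $N^3$ factor'': to turn the size blow-up (nodes $N\to O(N^2)$, edges $M\to O(NM)$) into a clean $N^2$ multiplicative overhead on $A$, you must use the fact that the employed RSP scheme's running time is essentially bilinear in the node and edge counts (as it is for the Lorenz--Raz scheme, $O\big(nm(\log\log n+1/\epsilon)\big)$); with that stated, and with $A$ pinned down as the cost on the \emph{original} $N$-node, $M$-link instance (as the paper's proof does), your estimate $L\cdot(N-1)\cdot N^2\cdot A=O\big(A\cdot\frac{1}{\epsilon}\cdot\log(B)\cdot N^3\big)$ is a complete argument. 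Your additional bookkeeping of the $O(N^3)$ preprocessing and the $O(NL)$ final scan, both dominated, is fine and is simply omitted in the paper.
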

\begin{proof}
Let $M$ be the number of links in the original network. Each time we
employ the RSP approximation scheme, we would incur a computational
complexity of $O(A)$, where $A$ corresponds to a network with $N$
nodes and $M$ links.

For each value of $l=1\ldots,L$, we call the RSP approximation as
follows: once for a network with $N$ nodes and $O(N)$ links (\ie\
for the network that contains $s$ and all the $u(1)$'s), once for a
network with roughly $2N$ nodes and $M$ links (\ie\ for the network
that contains, in addition to the above, all the $u(2)$'s and links
of the form $(u(1),v(2))$, once for a network with roughly $3N$
nodes and $2M$ links (\ie\ for the network that contains, in
addition to the above, all the $u(3)$'s and links of the form
$(u(2),v(3))$, and so on up to, once (the $(N-1)$-th time) for a
network with roughly $(N-1)N$ nodes and $(N-2)M$ links. The above
$N-1$ instances (more precisely, all but the first, which can be
neglected due to smaller complexity) aggregate to:

\begin{equation}
\begin{split}
    &O\big(A
      \cdot (2\cdot1+3\cdot2+\cdots+N\cdot(N-1)) \big) \\
    &\quad= O\big( A \cdot \sum_{i=1}^{N-1} i(i+1) \big)
    = O\big( A \cdot N^3 \big)
    \eqend
\end{split}
\end{equation}

The proof follows by noting that $L=O({1 \over \epsilon} \cdot \log
(B))$.
\end{proof}

\begin{theorem}
$\epsilon$-SMER is an $\epsilon$-optimal approximation scheme of
polynomial complexity. In particular, when employing the
approximation solution of~\cite{Lorenz99asimple} to the RSP problem,
$\epsilon$-SMER runs in $O(N^6 \cdot (\log \log N + {1 \over
\epsilon}) \cdot {1\over \epsilon} \cdot \log (B))$ time.
\end{theorem}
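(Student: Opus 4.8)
The plan is to obtain the theorem as an essentially immediate consequence of Lemmas~\ref{l:eps1} and~\ref{l:eps2}, specializing the abstract RSP subroutine only at the last step. First I would dispatch correctness: Lemma~\ref{l:eps1} already gives that the path $\hat{\Pi}$ returned by $\epsilon$-SMER satisfies $\mc{C}(\hat{\Pi})\le(1+\epsilon)\,\mc{C}(\Pi^*)$ for an optimal SMER path $\Pi^*$, and feasibility of $\hat{\Pi}$ as a genuine (simple) path is inherited from the \texttt{Expand\_Network} construction together with Lemmas~\ref{l:exp} and~\ref{l:hops}, which guarantee that each of the $N-1$ per-$l$ calls to the RSP approximation returns an $h$-hop path that is $\epsilon$-optimal with respect to the unbiased $\mc{C}_1$ metric among $h$-hop paths meeting the imposed $\mc{C}_2$-bound. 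Combined with Lemma~\ref{l:eps2}, which bounds the running time by $O\!\big(A\cdot\frac{1}{\epsilon}\cdot\log(B)\cdot N^3\big)$ --- polynomial in the network size and in $1/\epsilon$ whenever the employed RSP scheme of cost $O(A)$ is polynomial --- this establishes the first sentence, that $\epsilon$-SMER is an $\epsilon$-optimal approximation scheme of polynomial complexity.

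For the explicit bound I would instantiate $O(A)$ with the approximation scheme of~\cite{Lorenz99asimple}, which solves RSP on a graph with $n$ nodes and $m$ edges in time $O\!\big(mn\,(\log\log n + \tfrac{1}{\epsilon})\big)$. Since in our model every pair of nodes may form a link, the original network has $m=O(N^2)$ edges, so $A=O\!\big(N^2\cdot N\cdot(\log\log N+\tfrac{1}{\epsilon})\big)=O\!\big(N^3(\log\log N+\tfrac{1}{\epsilon})\big)$. Substituting this into the bound of Lemma~\ref{l:eps2} gives
\begin{equation*}
    O\!\Big(N^3\big(\log\log N+\tfrac{1}{\epsilon}\big)\cdot\tfrac{1}{\epsilon}\cdot\log(B)\cdot N^3\Big)
    = O\!\Big(N^6\big(\log\log N+\tfrac{1}{\epsilon}\big)\cdot\tfrac{1}{\epsilon}\cdot\log(B)\Big),
\end{equation*}
which is exactly the claimed expression.

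The step I expect to require the most care is verifying that this substitution into Lemma~\ref{l:eps2} is faithful. The factor $N^3$ there is not merely a hop count; it arises from summing the costs of the $N-1$ RSP invocations for a fixed $l$ over expanded subnetworks whose $i$-th instance has $\Theta(iN)$ nodes and $\Theta(iM)$ edges, yielding $\sum_{i=1}^{N-1} i(i+1)=\Theta(N^3)$. For the aggregation to remain valid after plugging in the Lorenz--Raz bound one must check (i) that the RSP complexity $O(mn(\log\log n+\frac1\epsilon))$ is genuinely of product form in the node and edge counts, so that the $i$-th instance costs at most $\Theta(i^2)$ times the base cost $A$, and (ii) that the polylogarithmic factor is insensitive to the blow-up, which holds because $\log\log(iN)\le\log\log(N^2)=O(\log\log N)$ for $i\le N$. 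Granting these two observations, the per-$l$ aggregate is $O\!\big(N^3(\log\log N+\frac1\epsilon)\cdot N^3\big)$, and multiplying by the number $L=O(\frac1\epsilon\log B)$ of delay bounds considered (as in the proof of Lemma~\ref{l:eps2}) reproduces the stated running time.
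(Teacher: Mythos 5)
Your proposal is correct and follows essentially the same route as the paper: correctness from Lemma~\ref{l:eps1}, running time from Lemma~\ref{l:eps2}, and then instantiating $A$ with the Lorenz--Raz bound $O(NM(\log\log N + 1/\epsilon))$ together with $M=O(N^2)$ to obtain the stated $O(N^6(\log\log N + \frac{1}{\epsilon})\cdot\frac{1}{\epsilon}\cdot\log(B))$. Your additional checks that the RSP complexity is of product form in node and edge counts and that $\log\log(iN)=O(\log\log N)$ are sound and make the aggregation in Lemma~\ref{l:eps2} slightly more rigorous than the paper's own one-line proof, but they do not change the argument.
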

\begin{proof}
The RSP scheme of~\cite{Lorenz99asimple} has computational
complexity of $O\big( (N \cdot M \cdot (\log \log N+1/\epsilon)
\big)$ for $N$ nodes and $M$ links. Depending on the limit on the
transmission power at each node, in worst-case we have $M=O(N^2)$,
\ie\ all nodes may be neighbors\fnote{Note that, typically, the network is sparse, \ie\ $M \ll N^2$, hence
the dependency on $N$ is more like $N^5$.}.
The proof then follows from
Lemmas~\ref{l:eps1}~and~\ref{l:eps2}.
\end{proof}

More efficient versions of $\epsilon$-SMER should be possible,
yet our goal has been to show that fully polynomial time $\epsilon$-approximation schemes
(FPTAS) exist for the NP-hard problem SMER.

\subsection{Distributed Implementation}
While it is not discussed in this paper, our routing algorithms can
be implemented in a distributed manner following standard techniques
of distance-vector routing. Note that the power allocation at the
physical layer is a local operation performed by the transmitting
node of each link based on the information from the routing algorithm
and topological information (collected, for instance, through
neighbor discovery before running the routing algorithm).

\section{Simulation Results}
\label{sec:simulation}

\subsection{Simulation Environment}
We have implemented our routing algorithms in a custom-built
simulator to study their performance in a variety of network
scenarios. We simulate a wireless network, in which nodes are
distributed uniformly at random in a square of area $5 \times 5$ with
node density $\sigma=3$. We also place a number of eavesdroppers in
the network with density $\sigma_E$, as described later. We consider
one eavesdropper per link. We keep the number of eavesdroppers
considerably less than that of the legitimate nodes in order to be
able to establish secure routes as we put a limit on the maximum
transmission power of each node. Every node has a maximum
transmission power that is set in such a way that the resulting
network becomes connected (the absolute value of the maximum power
does not affect the results). We choose two nodes $s$ and $d$ located
at the lower left and the upper right corners of the network,
respectively, and find paths from $s$ to $d$. We then compute the
total amount of energy consumed on each path using different routing
algorithms. The performance metric ``energy savings'' refers to the
percentage difference between total energy used by different
algorithms with respect to the benchmark. For simulation purposes, we
set $\pi=0.1$, $\sigma_E=1$, $N_0=1$, $\gamma_D = 0.8$, and $\gamma_E
= 0.6$, unless otherwise specified. The numbers reported are obtained
by averaging over $10$ simulation runs with different seeds.

\subsection{Simulated Algorithms}
In addition to
DP-SMER and \mbox{$\epsilon$-SMER}, we have also implemented a
security-agnostic algorithm based on minimum energy routing as a
benchmark to measure energy savings achieved by our algorithms. The
benchmark algorithm, called {\em security-agnostic shortest path
routing (SASP)}, is described below. Note that some of the
optimizations described in Sections~\ref{sec:linkcost}
and~\ref{sec:pathcost} have been incorporated in SASP, making it a
considerably efficient benchmark (see Subsection~\ref{sub:alloc}).

\begin{algorithm}
\caption{SASP (source $s$, dest. $d$, network $\mc{N}$).}
\begin{enumerate}
\item Find a shortest path in terms of transmission power between
$s$ and $d$ ignoring eavesdroppers. The standard Dijkstra's
algorithm can be used for this purpose.

\item Use~\eqref{eq:pi} to allocate an optimal eavesdropping probability
to each link of the computed path.

\item Use~\eqref{eq:pj} to allocate sufficient power to jammers on each link
with respect to the allocated eavesdropping probabilities in step
(2).
\end{enumerate}
\end{algorithm}

\subsection{Results and Discussion}

\spar{Effect of Eavesdropper Location on Link Cost}
For a fixed link between two nodes, the source transmission power is
also fixed as obtained in~\eqref{eq:ps}. Thus, the cost of the link
depends only on the jamming power which is a function of the
eavesdropper location as given by~\eqref{eq:pj1}.
Fig.~\ref{fig:onelink} shows the cost of establishing a secure link
between source $S$ (placed at the center) and destination $D$ for
different eavesdropper locations and $\pi = 0.001$. In the figure,
the color intensity at each point is proportional to the amount of
energy required to establish the link if the eavesdropper is placed
at that point. Clearly, by some maneuvering around an eavesdropper,
a significant reduction in energy cost can be achieved as the
eavesdropper becomes almost ineffective in some locations. This is
the main idea behind this work.

\begin{figure}[ht]
    \centering
    \subfloat[Path-loss exponent $\alpha=2$.]{
        \includegraphics[width=7cm]{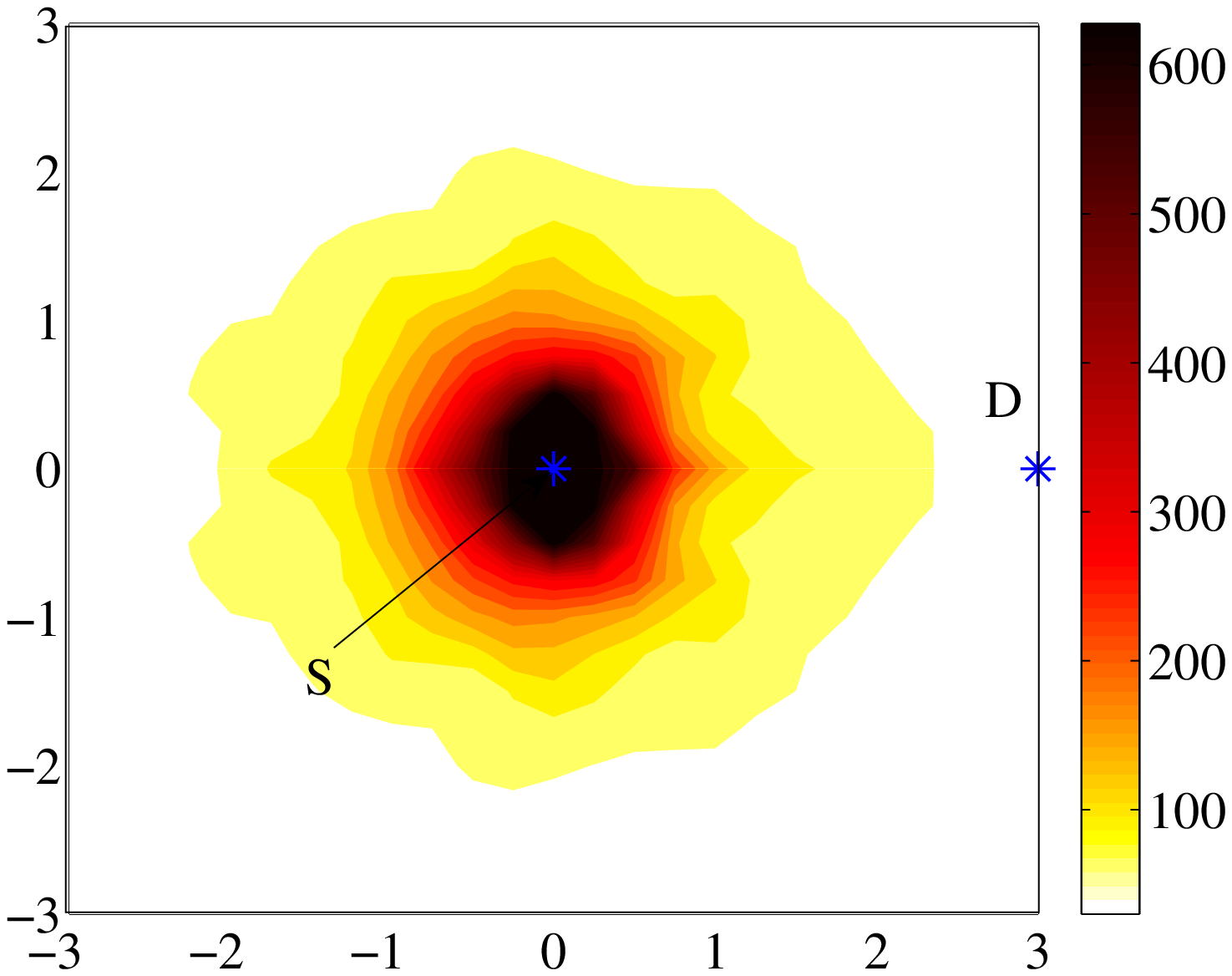}
        \label{fig:onelinka2}
    }
    \hspace{1mm}
    \subfloat[Path-loss exponent $\alpha=4$.]{
        \includegraphics[width=7cm]{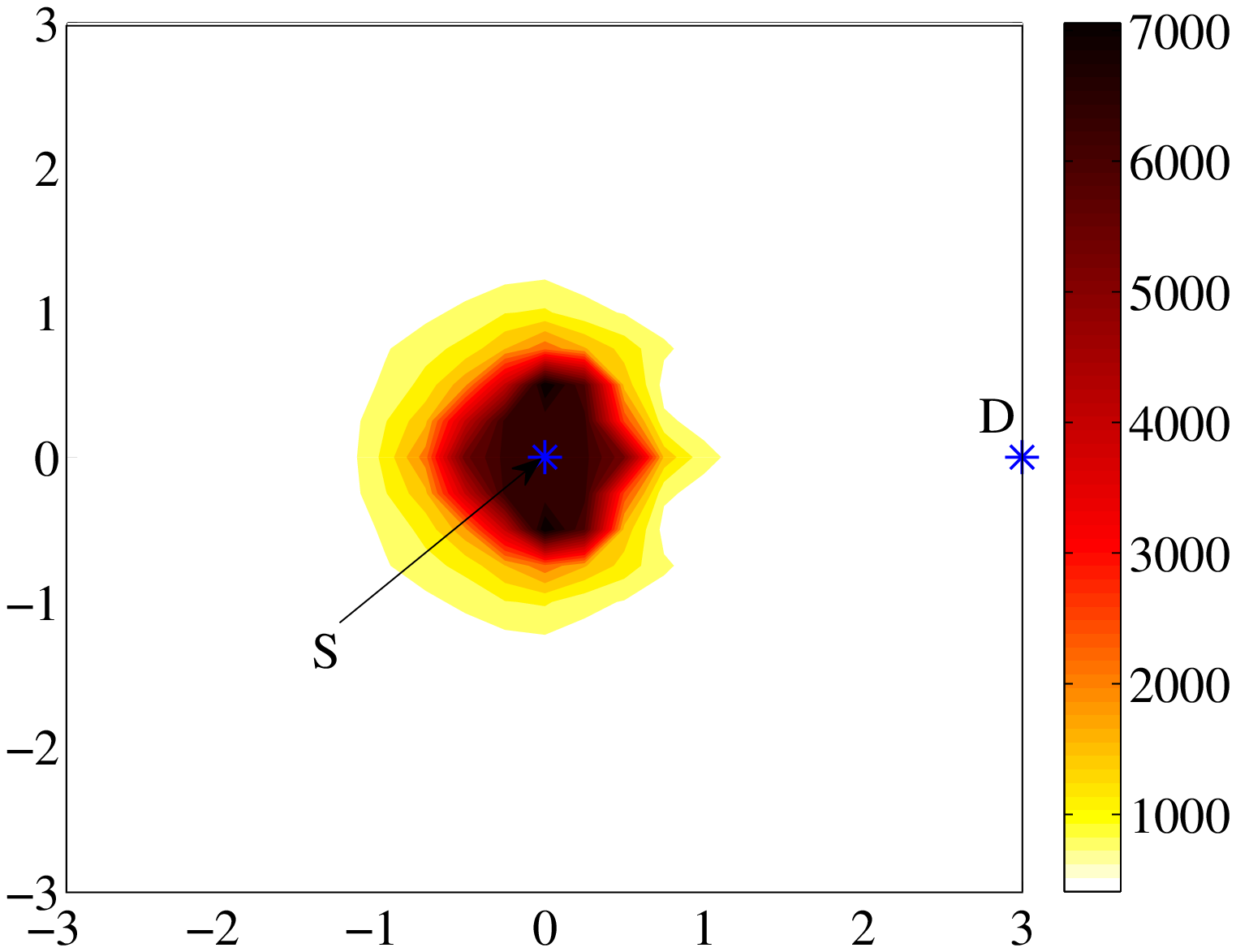}
        \label{fig:onelinka4}
    }
  \caption{Effect of eavesdropper location on link cost.}
  \label{fig:onelink}
\end{figure}

\spar{Effect of Optimal Secrecy Allocation on Path Cost}
\label{sub:alloc} For a fixed path subject to an end-to-end secrecy
requirement $\pi$, the optimal eavesdropping probability assigned to
each link of the path is given by~\eqref{eq:pi}, which in turn
determines the optimal jamming power allocated to each link of the
path using~\eqref{eq:pj}. Specifically, this is how power allocation
is performed in SASP in order to minimize power consumption.
Alternatively, a simple heuristic is to divide $\pi$ equally across
the links. That is, if the path contains $h$ links, then each link
$\ell_k$ is allocated sufficient jamming power to satisfy the
eavesdropping probability $\pi_k = \pi/h$. In
Fig.~\ref{fig:allocation}, we have depicted energy savings that can
be achieved ``solely'' by optimal secrecy allocation compared to
equal allocation for a fixed path that is computed by SASP.
Interestingly, as the number of eavesdroppers increases or the
signal propagation becomes more restricted, optimal secrecy
allocation becomes even more important, achieving energy savings of
up to $72\%$ ($47\%$) for $\alpha=4$ ($\alpha=2$) in the simulated
network.

\begin{figure}[ht]
    \centering
    \subfloat[Effect of eavesdropper density.]{
        \includegraphics[width=8cm]{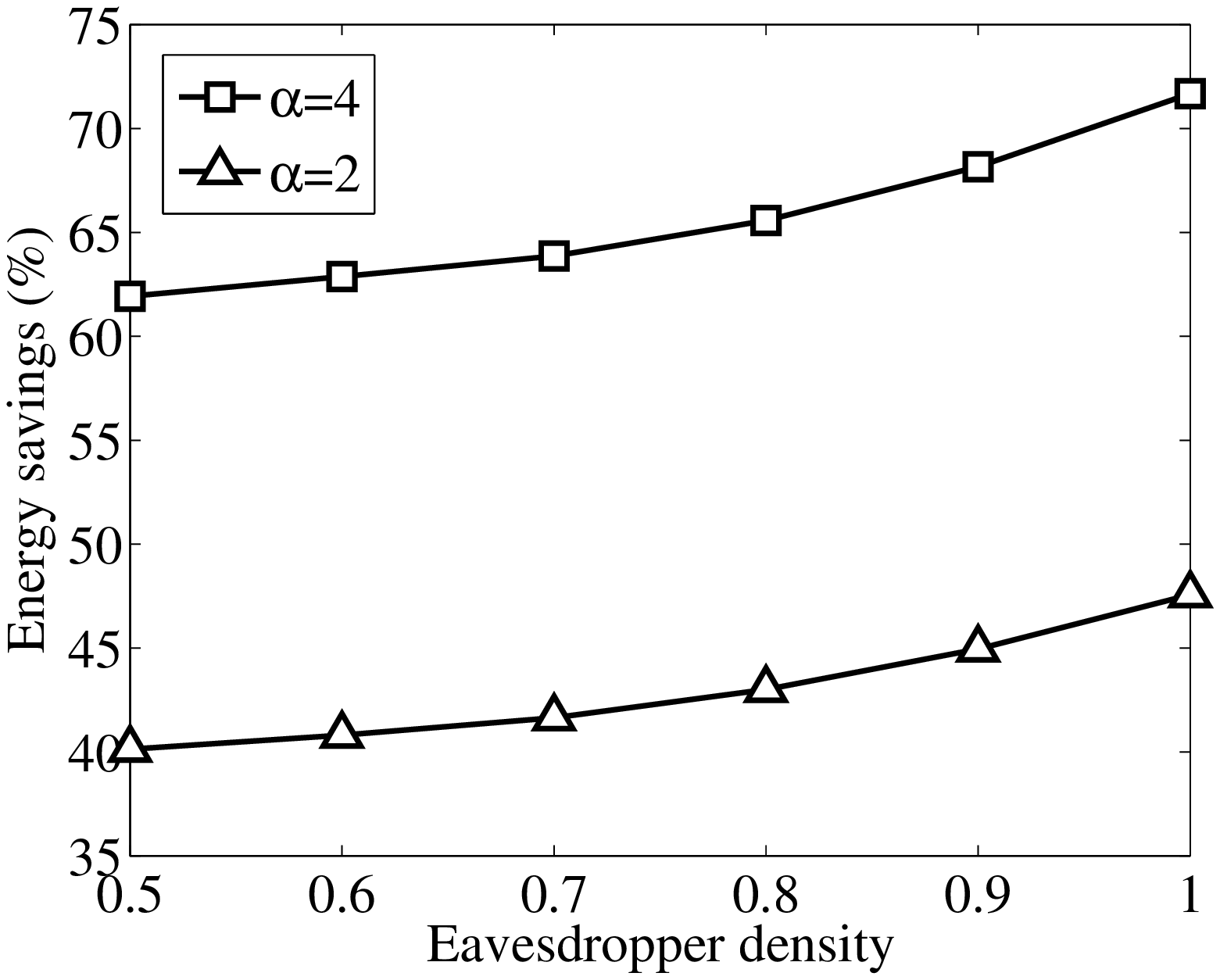}
        \label{fig:optdens}
    }
    \subfloat[Effect of eavesdropping prob.]{
        \includegraphics[width=8cm]{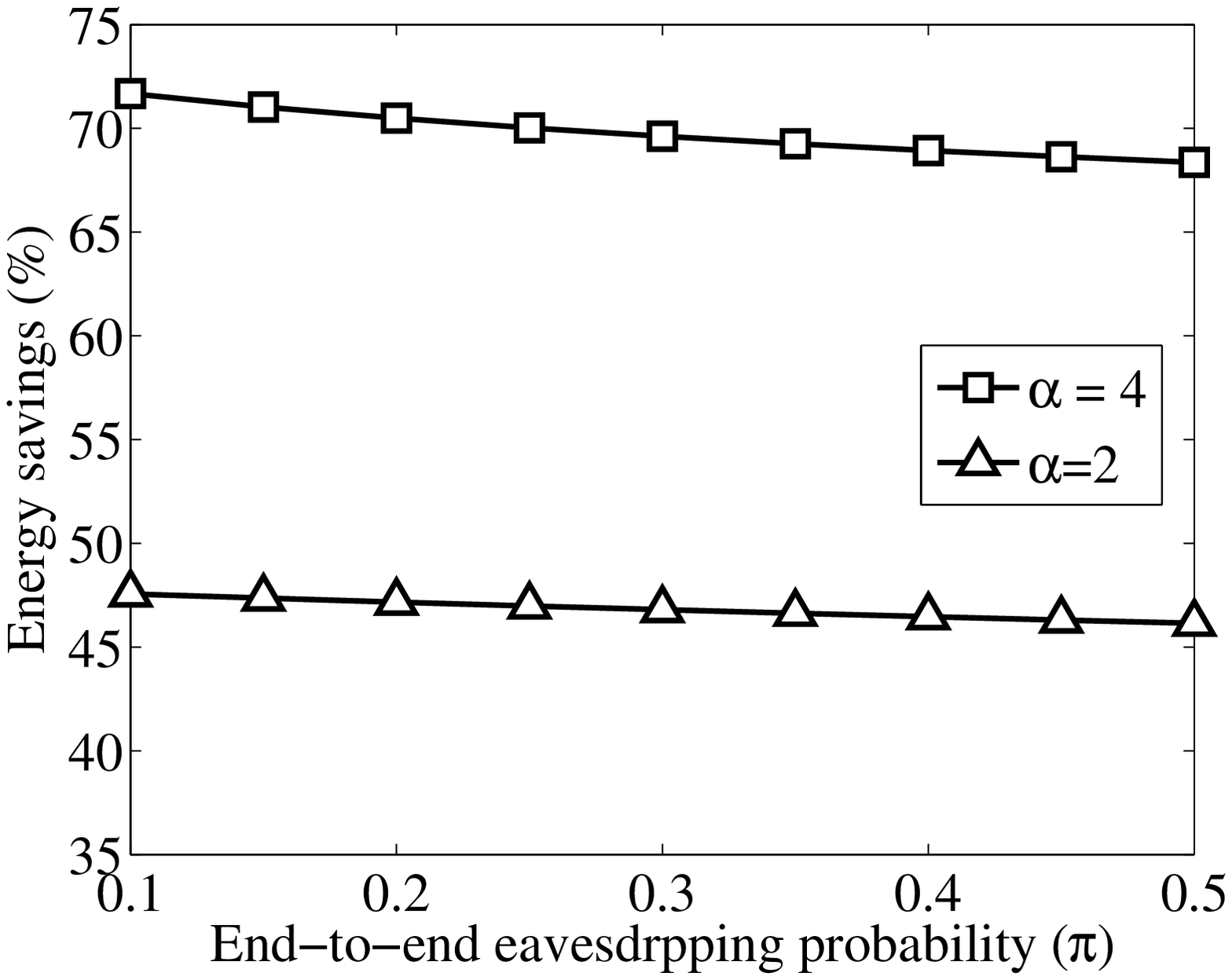}
        \label{fig:optprob}
    }
  \caption{Energy savings achieved by optimal secrecy allocation.}
  \label{fig:allocation}
\end{figure}

\spar{Non-uniform Eavesdropper Placement}
To gain more insight about the behavior of different routing
algorithms, in this experiment, rather than randomly distributing
eavesdroppers in the network, we strategically place them close to
the line that connects the source and destination. Ideally, SMER and
$\epsilon$-SMER should avoid the shortest path that crosses the
network diagonally. This is indeed the behavior observed in the
simulations as depicted in Fig.~\ref{fig:nonuniform} (`\scalebox{0.6}{$\bigstar$}'
denotes an eavesdropper). As expected, SASP blasts right through the
eavesdroppers, while SMER, $0.1$-SMER and $1.0$-SMER route around
them resulting in $88\%$, $86\%$ and $85\%$ energy savings,
respectively.
\begin{figure}[ht]
  \centering
  \includegraphics[width=8cm]{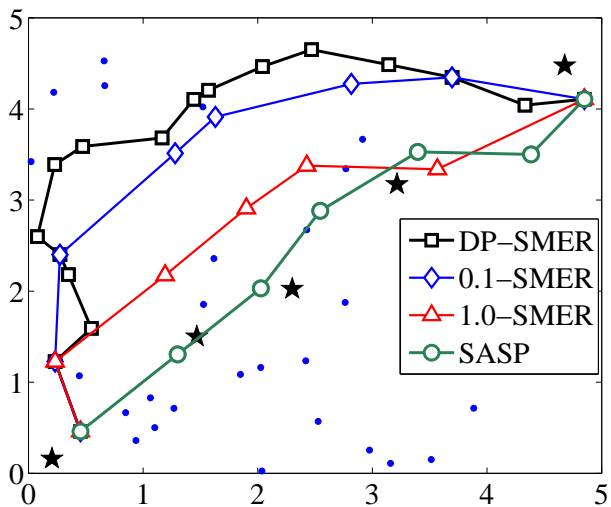}
  \caption{Snapshot of paths computed by different algorithms.}
  \label{fig:nonuniform}
\end{figure}

\spar{Uniform Eavesdropper Placement} In this experiment,
eavesdroppers are placed in the network uniformly at random. As seen
in Fig.~\ref{fig:uniform}, our algorithms consistently outperform
SASP for a wide range of eavesdropper densities and eavesdropping
probabilities. In particular, energy savings of up to $99\%$ and
$98\%$ (for $\alpha=4$) can be achieved by SMER and $0.1$-SMER,
respectively.

\begin{figure}[ht]
    \centering
    \subfloat[Effect of eavesdropper density.]{
        \includegraphics[width=8cm]{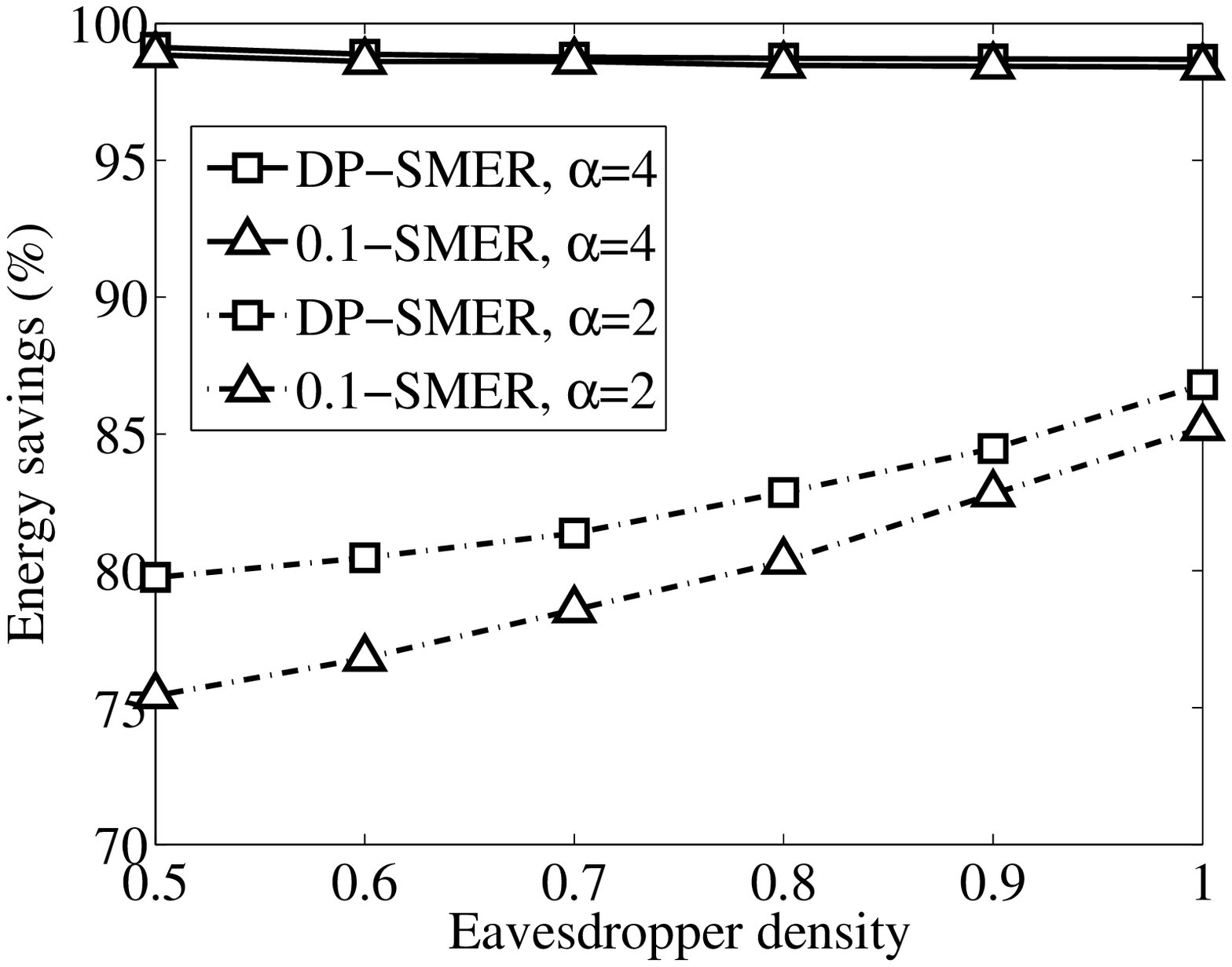}
        \label{fig:uniform1}
    }
    \subfloat[Effect of eavesdropping prob.]{
        \includegraphics[width=8cm]{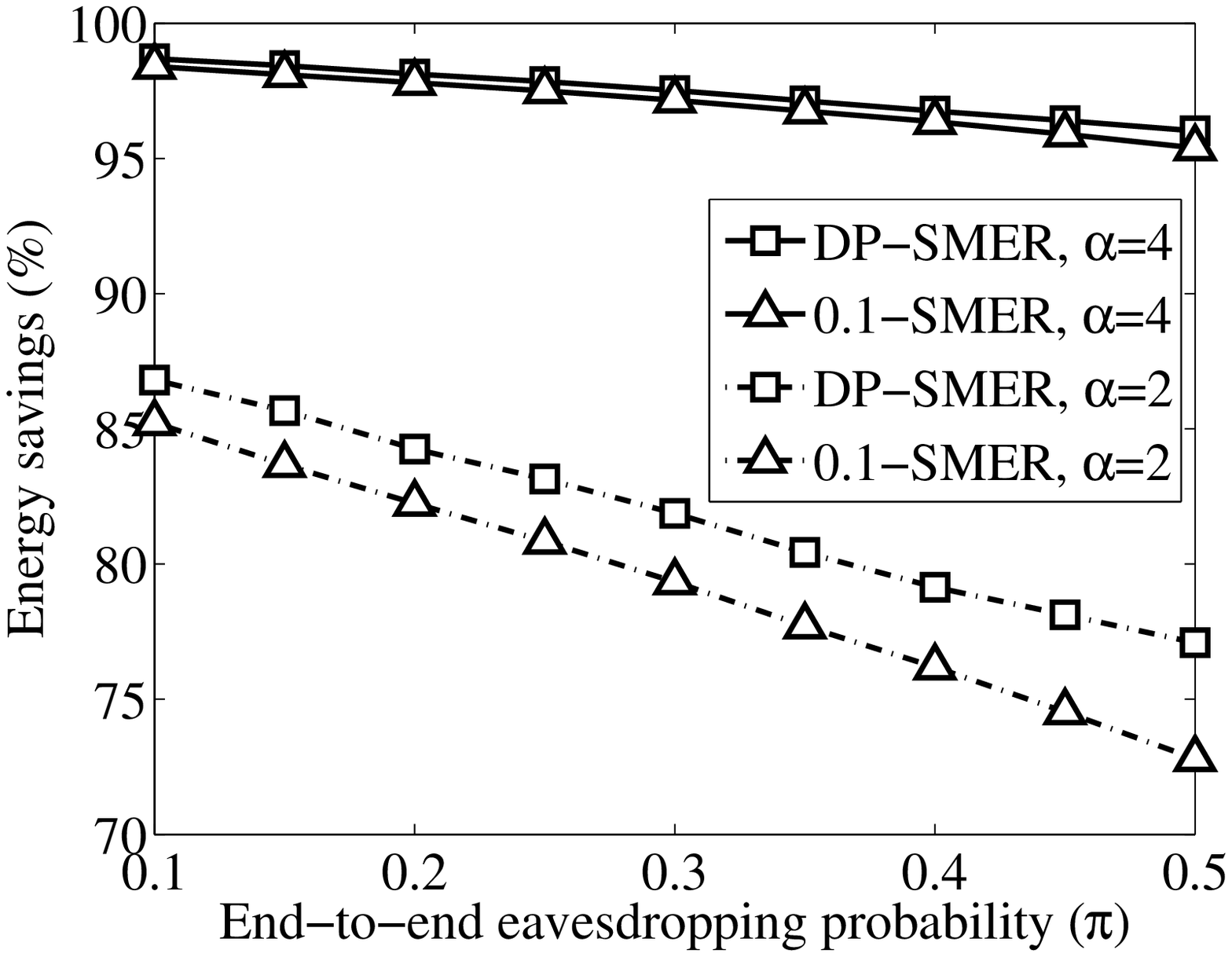}
        \label{fig:uniform2}
    }
  \caption{Energy savings with uniform eavesdropper placement.}
  \label{fig:uniform}
\end{figure}

\spar{Effect of Network Size}
Fig.~\ref{fig:size} shows the energy savings achieved by different
algorithms in networks with varying sizes. The ``network dimension''
refers to the length of one side of the square area that contains
the network nodes. As observed from the figure, the energy saving
is an increasing function of the network size. Interestingly, as the
network size increases, the effect of the propagation environment
diminishes in such a way that energy savings for $\alpha=2$ and
$\alpha=4$ converge to the same numbers as opposed to the previous
scenarios. As the network size increases so does the average length
of the path (in terms of the number of hops) between the source and
destination nodes. Those paths that are longer provide more
opportunities for energy savings on each link of the path resulting
in increased overall energy savings. This effect works in favor of
$\alpha=2$ as well as $\alpha=4$. However, given the high values of
energy savings for $\alpha=4$ (due to longer paths compared to
$\alpha=2$), the effect of longer paths is more prominent for
$\alpha=2$.
\begin{figure}[ht]
    \centering
    \subfloat[Non-uniform eavesdrop. placement.]{
        \includegraphics[width=8cm]{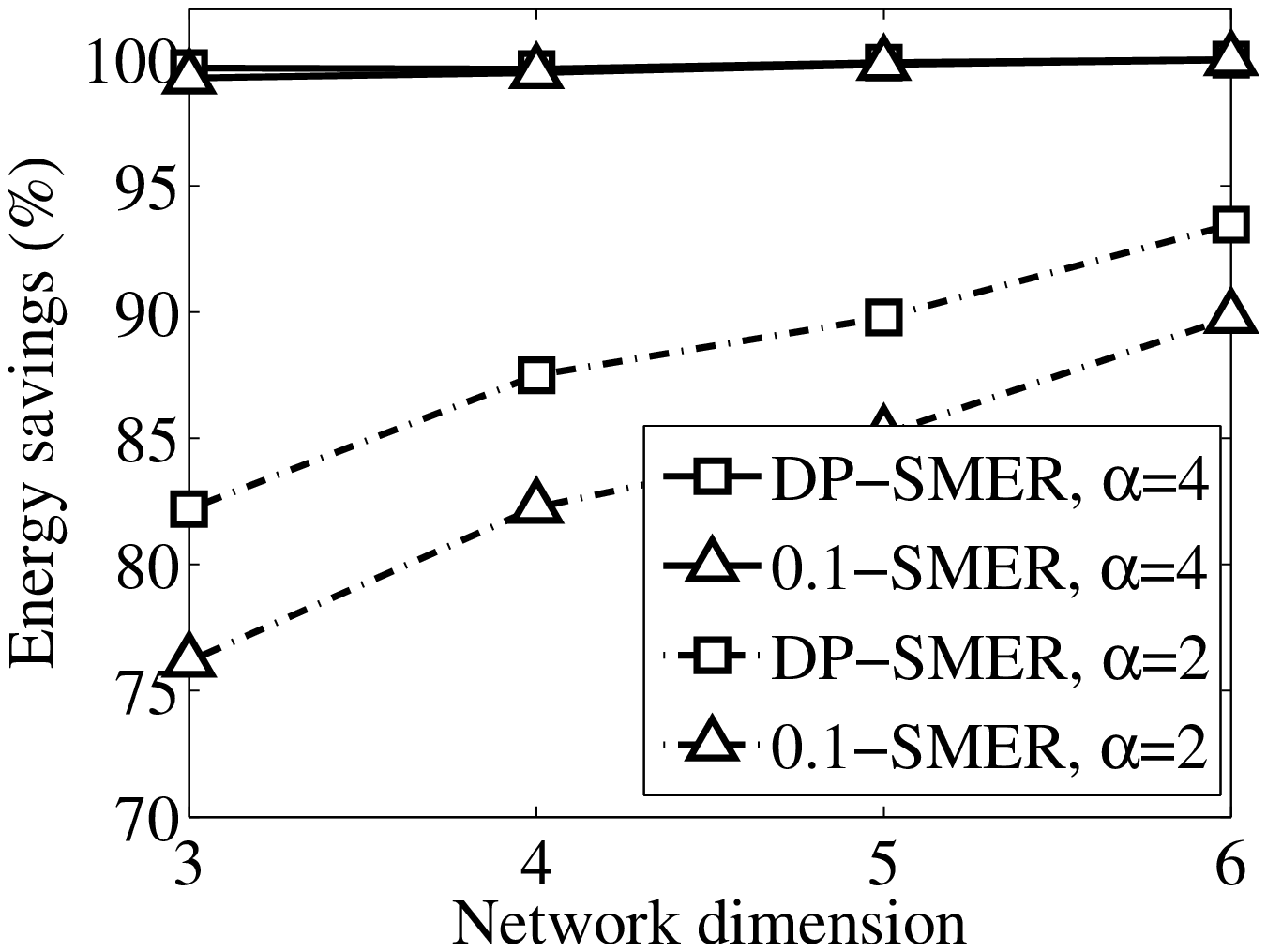}
        \label{fig:size_diag}
    }
    \subfloat[Uniform eavesdrop. placement.]{
        \includegraphics[width=8cm]{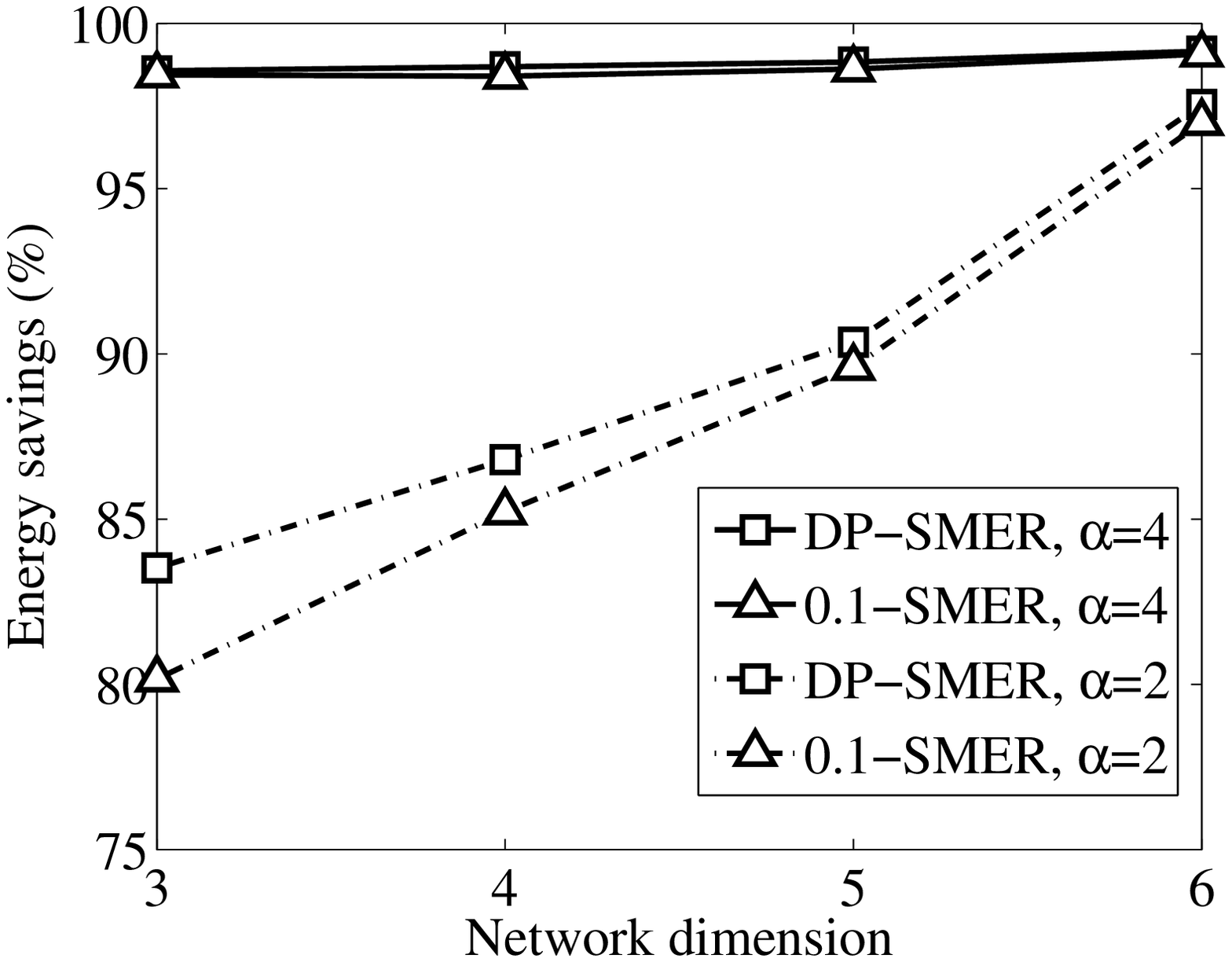}
        \label{fig:size_uniform}
    }
  \caption{Effect of network size on energy savings.}
  \label{fig:size}
\end{figure}

\spar{Effect of Jamming Set} The cardinality of the jamming set
affects the power allocation to jammers. In this experiment, we
change the number of jammers that participate in secure transmissions
on each link and compute the energy savings achieved by different
algorithms. Figs.~\ref{fig:jammers}\subref{fig:jammers_diag}
and~\ref{fig:jammers}\subref{fig:jammers_uniform}, respectively, show
the energy savings achieved for non-unform and uniform placement of
eavesdroppers. Interestingly, in these scenarios, a small number of
jammers, namely $2$, is sufficient to obtain most of the benefits of
cooperative jamming, which should greatly simplify any practical
implementation.

\begin{figure}[ht]
    \centering
    \subfloat[Non-uniform eavesdrop. placement.]{
        \includegraphics[width=8cm]{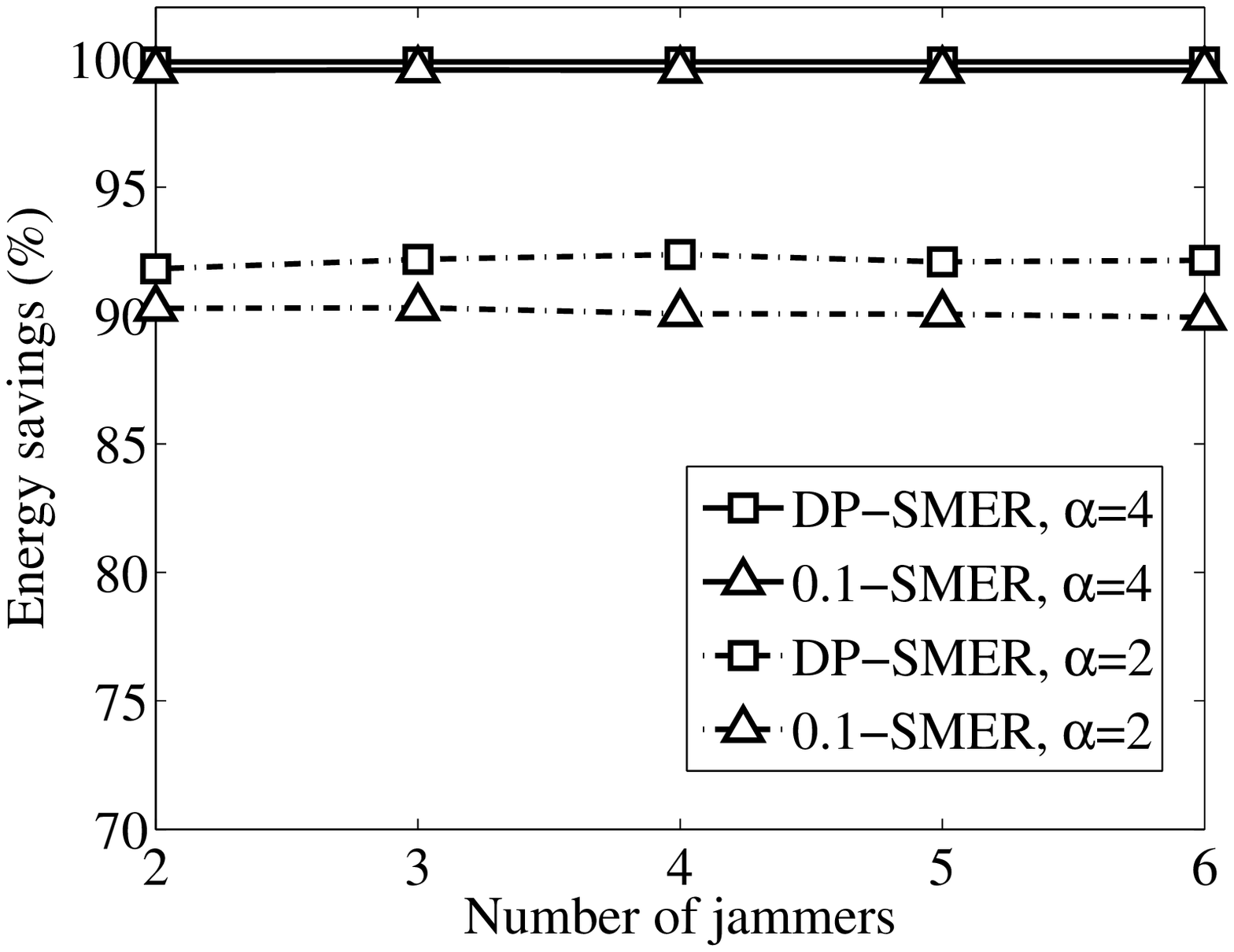}
        \label{fig:jammers_diag}
    }
    \subfloat[Uniform eavesdrop. placement.]{
        \includegraphics[width=8cm]{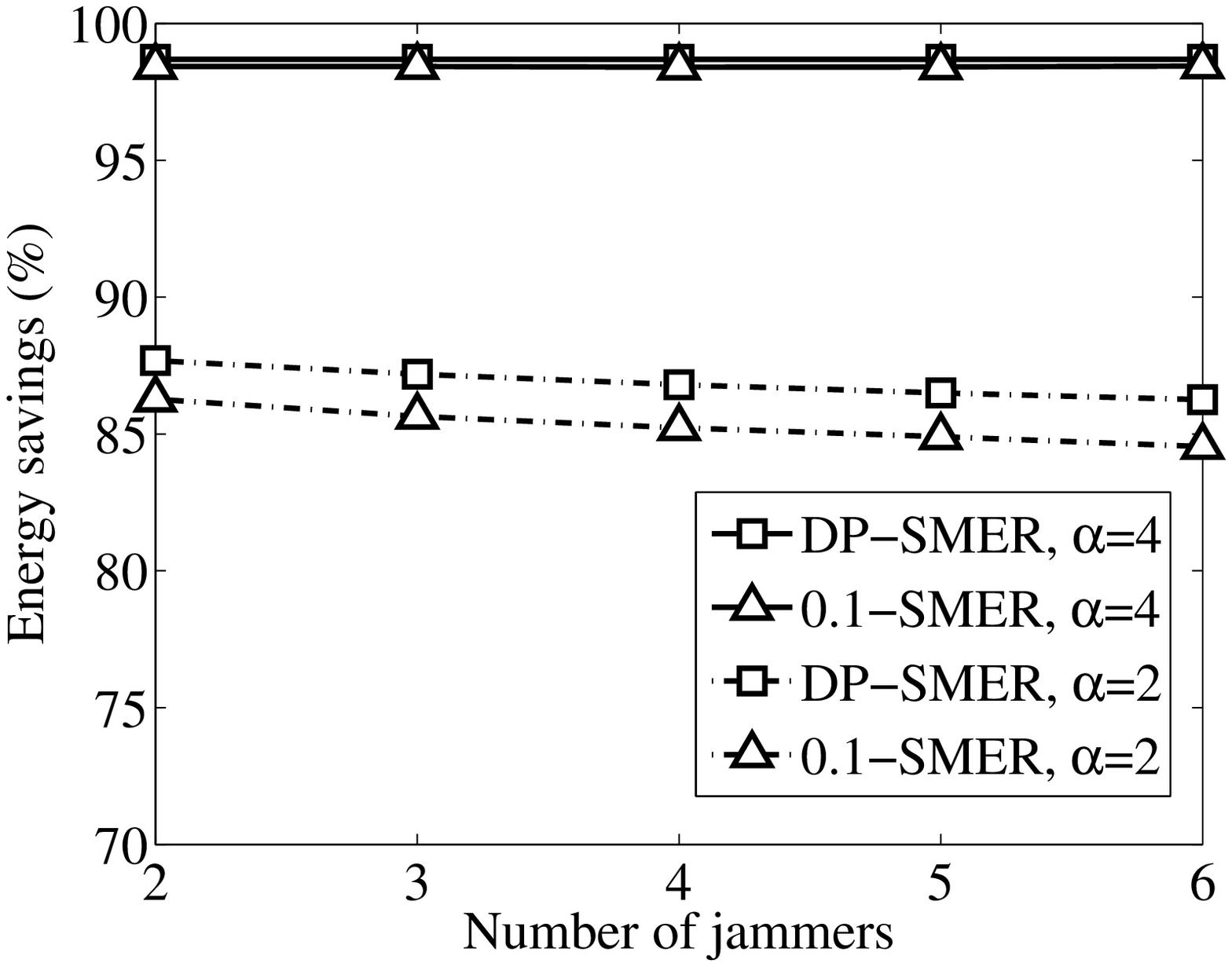}
        \label{fig:jammers_uniform}
    }
  \caption{Effect of the jamming set on energy savings.}
  \label{fig:jammers}
\end{figure}

\section{Related Work}
\label{sec:related}

A survey of prior work is presented in this section.

\spar{Secure Routing in Multi-hop Networks} While there are numerous
works on secure routing in wireless networks (see,
\eg~\cite{guizani08} and references therein), their focus is on
preventing malicious attacks that disrupt the operation of the
routing protocol using application level mechanisms such as
authentication and cryptography. The focus of this paper, on the
other hand, is on secure transmission of messages via the most
cost-effective paths in the network, which is orthogonal to the
secure routing problem considered in the existing literature.

\spar{Wireless Physical Layer Security}
The idea behind physical layer security is to exploit the
characteristics of the wireless channel such as fading to provide
secure wireless communications.
The foundations of information
theoretic security, which is the theoretical basis for physical layer
security, were laid by Wyner and
others~\cite{wyner1975wire,leung1978gaussian,csiszar1978broadcast}
based on Shannon's notion of perfect
secrecy~\cite{shannon1949communication}.
In the classical wiretap
model of Wyner, to achieve a strictly positive secrecy rate, the
legitimate user should have some advantage over the eavesdropper in
terms of SNR. Later, Maurer~\cite{maurer93} proved that even when a
legitimate user has a worse channel than an eavesdropper, it is
possible to have secure communication.
While some physical layer
security techniques allow for opportunistic exploitation of the
space/time/user diversity for secret
communications~\cite{maurer93,bloch2008}, others actively manipulate
the wireless channel to block eavesdroppers by employing techniques
such as multiple antennas~\cite{yates07} and
jamming~\cite{tekin2008general,elgamal08}. While some of these
techniques have been successfully implemented in practical
systems~\cite{katabi11}, physical layer security is focused on very
special network topologies, \eg\ single-hop networks. In this work,
we have developed algorithms to extend these techniques to multi-hop
networks.

\spar{Scaling Laws in Large Secure Networks} Motivated
by~\cite{gupta00}, recently, throughput scaling versus security
tradeoffs have been investigated in the context of large wireless
networks~\cite{liang2009secrecy, koyluoglu2010secrecy,
vasudevan2010security, Dennis2011JSAC}. Specifically, for cooperative
jamming when the eavesdroppers are uniformly distributed, it was
shown that if the number of eavesdroppers grows sub-linearly with
respect to the number of legitimate nodes, a positive throughput for
secure communication is achievable~\cite{Dennis2011JSAC}.

\spar{Security Based on Network Topology}
When there is sufficient path diversity in a network, different
messages can be routed over different parts of the network in the
hope that an eavesdropper would be incapable of capturing all
messages from across the network. To exploit network diversity for
security, various techniques based on multi-path
routing~\cite{fang09,krunz10} and network
coding~\cite{yeung02,jain04} have been investigated. While such techniques are suitable for wired networks,
their application in wireless networks is challenging due to lack of
path diversity at the source or destination of a communication
session. Moreover, there are considerable complications when splitting a flow among several paths, in particular,
at the granularity of a single session.
Moreover, network topology, in wireless networks, is a
function of power allocation at the physical-layer and propagation
environment, \eg\ fading.
Nevertheless, our approach is
complimentary to these techniques, by providing a mechanism to find a
minimum cost path that is information-theoretically secure,
regardless of the network diversity.

\vspace{-0.1cm}
\section{Conclusion}
\label{sec:conclusion}

This paper studied the problem of secure minimum energy routing in
wireless networks. It was shown that while the problem is
\mbox{NP-hard}, it admits exact pseudo-polynomial and fully polynomial time
$\epsilon$-approximation algorithmic solutions.
Furthermore, using simulations, we showed that our algorithms
significantly outperform security-agnostic algorithms based on
minimum energy routing. Finally, we note that our work can be
potentially extended to incorporate other secrecy models. Such extensions are left for future work.

%\bibliographystyle{IEEEtran}
%\bibliography{IEEEabrv,secroute}

\begin{thebibliography}{10}
\providecommand{\url}[1]{#1}
\csname url@samestyle\endcsname
\providecommand{\newblock}{\relax}
\providecommand{\bibinfo}[2]{#2}
\providecommand{\BIBentrySTDinterwordspacing}{\spaceskip=0pt\relax}
\providecommand{\BIBentryALTinterwordstretchfactor}{4}
\providecommand{\BIBentryALTinterwordspacing}{\spaceskip=\fontdimen2\font plus
\BIBentryALTinterwordstretchfactor\fontdimen3\font minus
  \fontdimen4\font\relax}
\providecommand{\BIBforeignlanguage}[2]{{%
\expandafter\ifx\csname l@#1\endcsname\relax
\typeout{** WARNING: IEEEtran.bst: No hyphenation pattern has been}%
\typeout{** loaded for the language `#1'. Using the pattern for}%
\typeout{** the default language instead.}%
\else
\language=\csname l@#1\endcsname
\fi
#2}}
\providecommand{\BIBdecl}{\relax}
\BIBdecl

\bibitem{stinson2006cryptography}
D.~Stinson, \emph{Cryptography: theory and practice}.\hskip 1em plus 0.5em
  minus 0.4em\relax CRC press, 2006.

\bibitem{NSA_defense}
\BIBentryALTinterwordspacing
{National Security Agency}, ``Defense in depth: A practical strategy for
  achieving information assurance in today's highly networked environments.''
  [Online]. Available:
  \url{http://www.nsa.gov/ia/_files/support/defenseindepth.pdf}
\BIBentrySTDinterwordspacing

\bibitem{shannon1949communication}
C.~Shannon, \emph{Communication theory of secrecy systems}.\hskip 1em plus
  0.5em minus 0.4em\relax AT\&T, 1949.

\bibitem{wyner1975wire}
A.~Wyner, ``The wire-tap channel,'' \emph{Bell Sys. Tech. J.}, vol.~54, no.~8,
  1975.

\bibitem{leung1978gaussian}
S.~Leung-Yan-Cheong and M.~Hellman, ``The {Gaussian} wiretap channel,''
  \emph{{IEEE} Trans. Inf. Theory}, vol.~24, no.~4, 1978.

\bibitem{csiszar1978broadcast}
I.~Csisz{\'a}r and J.~Korner, ``Broadcast channels with confidential
  messages,'' \emph{{IEEE} Trans. Inf. Theory}, vol.~24, no.~3, 1978.

\bibitem{maurer93}
U.~M. Maurer, ``Secret key agreement by public discussion from common
  information,'' \emph{{IEEE} Trans. Inf. Theory}, vol.~39, no.~3, 1993.

\bibitem{bloch2008}
M.~Bloch \emph{et~al.}, ``Wireless information-theoretic security,''
  \emph{{IEEE} Trans. Inf. Theory}, vol.~54, no.~6, 2008.

\bibitem{goel08}
S.~Goel and R.~Negi, ``Guaranteeing secrecy using artificial noise,''
  \emph{{IEEE} Trans. Wireless Commun.}, vol.~7, no.~6, 2008.

\bibitem{tekin2008general}
E.~Tekin and A.~Yener, ``The general {Gaussian} multiple-access and two-way
  wiretap channels: Achievable rates and cooperative jamming,'' \emph{{IEEE}
  Trans. Inf. Theory}, vol.~54, no.~6, 2008.

\bibitem{dong10}
L.~Dong, Z.~Han, A.~P. Petropulu, and H.~V. Poor, ``Improving wireless physical
  layer security via cooperating relays,'' \emph{{IEEE} Trans. Signal
  Process.}, vol.~58, no.~3, 2010.

\bibitem{elgamal08}
L.~Lai and H.~{El Gamal}, ``The relay-eavesdropper channel: Cooperation for
  secrecy,'' \emph{{IEEE} Trans. Inf. Theory}, vol.~54, no.~9, 2008.

\bibitem{Dennis2011JSAC}
D.~Goeckel \emph{et~al.}, ``Artificial noise generation from cooperative relays
  for everlasting secrecy in two-hop wireless networks,'' \emph{{IEEE} J. Sel.
  Areas Commun.}, vol.~29, no.~10, 2011.

\bibitem{haenggi2008secrecy}
M.~Haenggi, ``The secrecy graph and some of its properties,'' in \emph{IEEE
  ISIT}, Jun. 2008.

\bibitem{pinto2008physical}
P.~Pinto, J.~Barros, and M.~Win, ``Physical-layer security in stochastic
  wireless networks,'' in \emph{IEEE ICCS}, Nov. 2008.

\bibitem{pinto2009wireless}
------, ``Wireless physical-layer security: The case of colluding
  eavesdroppers,'' in \emph{IEEE ISIT}, Jun. 2009.

\bibitem{liang2009secrecy}
Y.~Liang, H.~Poor, and L.~Ying, ``Secrecy throughput of {MANET}s with malicious
  nodes,'' in \emph{IEEE ISIT}, Jun. 2009.

\bibitem{koyluoglu2010secrecy}
O.~Koyluoglu, E.~Koksal, and H.~{El Gamal}, ``On secrecy capacity scaling in
  wireless networks,'' in \emph{IEEE ITA}, Feb. 2010.

\bibitem{vasudevan2010security}
S.~Vasudevan, D.~Goeckel, and D.~Towsley, ``Security-capacity trade-off in
  large wireless networks using keyless secrecy,'' in \emph{ACM Mobihoc}, Sep.
  2010.

\bibitem{swindle11}
J.~Huang and A.~L. Swindlehurst, ``Robust secure transmission in {MISO}
  channels with imperfect {ECSI},'' in \emph{IEEE Globecom}, 2011.

\bibitem{yates07}
Z.~Li, W.~Trappe, and R.~Yates, ``Secret communication via multi-antenna
  transmission,'' in \emph{CISS}, Mar. 2007.

\bibitem{katabi11}
S.~Gollakota \emph{et~al.}, ``They can hear your heartbeats: Non-invasive
  security for implantable medical devices,'' in \emph{ACM Sigcomm}, 2011.

\bibitem{banerjee11}
S.~Oh, T.~Vu, M.~Gruteser, and S.~Banerjee, ``Phantom: Physical layer
  cooperation for location privacy protection,'' in \emph{IEEE Infocom},
  Orlando, USA, Mar. 2012.

\bibitem{tse05wireless}
D.~Tse and P.~Viswanath, \emph{Fundamentals of wireless communications}.\hskip
  1em plus 0.5em minus 0.4em\relax Cambridge, UK: Cambridge University Press,
  2005.

\bibitem{goldsmith2005}
A.~Goldsmith, \emph{Wireless communications}.\hskip 1em plus 0.5em minus
  0.4em\relax Cambridge, 2005.

\bibitem{tessellation}
D.~Seymour and J.~Britton, \emph{Introduction to Tessellations}.\hskip 1em plus
  0.5em minus 0.4em\relax Palo Alto, USA: Dale Seymour Publications, 1990.

\bibitem{mostafa_twireless}
M.~Dehghan, D.~Goeckel, M.~Ghaderi, and Z.~Ding, ``Energy efficiency of
  cooperative jamming strategies in secure wireless networks,'' \emph{IEEE
  Transactions on Wireless Communications}, vol.~54, no.~8, 2012.

\bibitem{beamform_overview}
R.~Mudumbai, D.~Brown, U.~Madhow, and H.~Poor, ``Distributed transmit
  beamforming: challenges and recent progress,'' \emph{IEEE Communications
  Magazine}, vol.~47, no.~2, 2009.

\bibitem{beamform_implementation}
M.~Rahman, H.~Baidoo-Williams, R.~Mudumbai, and S.~Dasgupta, ``Fully wireless
  implementation of distributed beamforming on a software-defined radio
  platform,'' in \emph{ADM/IEEE IPSN}, 2012.

\bibitem{brown_null}
D.~Brown, U.~Madhow, P.~Bidigare, and S.~Dasgupta, ``Receiver-coordinated
  distributed transmit nullforming with channel state uncertainty,'' in
  \emph{Conference on Information Sciences and Systems (CISS)}, Mar. 2012.

\bibitem{turin}
G.~L. Turin, ``The characteristic function of hermitian quadratic forms in
  complex normal variables,'' \emph{Biometrika}, vol.~47, no. 1/2, 1960.

\bibitem{shamir79}
A.~Shamir, ``How to share a secret,'' \emph{Commun. ACM}, vol.~22, no.~11,
  1979.

\bibitem{garey}
M.~R. Garey and D.~S. Johnson, \emph{Computers and interactability: A guide to
  the theory of NP-Completeness}.\hskip 1em plus 0.5em minus 0.4em\relax New
  York, USA: W.H. Freeman and Company, 1979.

\bibitem{Lorenz99asimple}
D.~H. Lorenz and D.~Raz, ``A simple efficient approximation scheme for the
  restricted shortest path problem,'' \emph{Operations Research Letters},
  vol.~28, 1999.

\bibitem{guizani08}
L.~Abusalah \emph{et~al.}, ``A survey of secure mobile ad hoc routing
  protocols,'' \emph{{IEEE} Commun. Surveys Tuts.}, vol.~10, no.~4, 2008.

\bibitem{gupta00}
P.~Gupta and P.~R. Kumar, ``The capacity of wireless networks,'' \emph{{IEEE}
  Trans. Inf. Theory}, vol.~46, no.~2, 2000.

\bibitem{fang09}
W.~Lou, W.~Liu, Y.~Zhang, and Y.~Fang, ``{SPREAD}: Improving network security
  by multipath routing in mobile ad hoc networks,'' \emph{ACM Wireless
  Networks}, vol.~15, no.~3, 2009.

\bibitem{krunz10}
T.~Shu, M.~Krunz, , and S.~Liu, ``Secure data collection in wireless sensor
  networks using randomized dispersive routes,'' \emph{{IEEE} Trans. Mobile
  Comput.}, vol.~9, no.~7, 2010.

\bibitem{yeung02}
N.~Cai and R.~W. Yeung, ``Secure network coding,'' in \emph{IEEE ISIT}, Jun.
  2002.

\bibitem{jain04}
K.~Jain, ``Security based on network topology against the wiretapping attack,''
  \emph{{IEEE} Wireless Commun. Mag.}, vol.~11, no.~1, 2004.

\end{thebibliography}
% Generated by IEEEtran.bst, version: 1.13 (2008/09/30)

\end{document}